\newtheorem{theorem}{Theorem}
\newtheorem{corollary}{Corollary}
\newtheorem{definition}{Definition}
\newtheorem{prop}{Proposition}
\newtheorem{lemma}{Lemma}
\newtheorem{remark}{Remark}
\newcommand{\beq}{\begin{equation}}
\newcommand{\eeq}{\end{equation}}
\newcommand{\barr}{\left[\begin{array}}
\newcommand{\earr}{\end{array}\right]}
\newcommand{\im}{\mbox{im\,}}
\newcommand{\rank}{\mbox{rank}\,}
\newcommand{\fp}{\mathbb{F}_{p}}
\newcommand{\fq}{\mathbb{F}_{q}}
\newcommand{\fqm}{\mathbb{F}_{q^{m}}}
\newcommand{\bpf}{\begin{proof}}
\newcommand{\epf}{\end{proof}}
\newcommand{\zz}{\ensuremath{\mathbb{Z}}}
\newcommand{\ftwo}{\ensuremath{\mathbb{F}_{2}}}
\newcommand{\ff}{\ensuremath{\mathbb{F}}}
\newcommand{\bi}{\begin{itemize}}
\newcommand{\ei}{\end{itemize}}
\newcommand{\bnum}{\begin{enumerate}}
\newcommand{\enum}{\end{enumerate}}
\newcommand{\bc}{\begin{center}}
\newcommand{\al}{\alpha}
\title{Local inversion of maps: A new attack on Symmetric encryption, RSA and ECDLP}
\author{Virendra Sule\\Dept.\ of Electrical Engineering\\Indian Institute of Technology Bombay, India\\(vrs@ee.iitb.ac.in)}
\date{January 14, 2022}
\begin{document}
\maketitle
\emph{Subject Classification}: cs.CC, cs.CR, math.NT
\begin{abstract}
    This paper presents algorithms for local inversion of maps and shows how several important computational problems such as cryptanalysis of symmetric encryption algorithms, RSA algorithm and solving the elliptic curve discrete log problem (ECDLP) can be addressed as local inversion problems. The methodology is termed as the \emph{Local Inversion Attack}. It utilizes the concept of \emph{Linear Complexity} (LC) of a recurrence sequence generated by the map defined by the cryptanalysis problem and the given data. It is shown that when the LC of the recurrence is bounded by a bound of polynomial order in the bit length of the input to the map, the local inversion can be accomplished in polynomial time. Hence an incomplete local inversion algorithm which searches a solution within a specified bound on computation can estimate the density of weak cases of cryptanalysis defined by such data causing low LC. Such cases can happen accidentally but cannot be avoided in practice and are fatal insecurity flaws of cryptographic primitives which are wrongly assumed to be secure on the basis of exponential average case complexity. An incomplete algorithm is proposed for solving problems such as key recovery of symmetric encryption algorithms, decryption of RSA ciphertext without factoring the modulus, decrypting any ciphertext of RSA given one plaintext ciphertext pair created with same public key in chosen ciphertext attack and solving the discrete logarithm on elliptic curves over finite fields (ECDLP) as local inversion problems. It is shown that when the LCs of the respective recurrences for given data are small, solutions of these problems are possible in practically feasible time and memory resources.
\end{abstract}

\section{Introduction}
This paper proposes a new attack called \emph{Local Inversion Attack} which is applicable for cryptanalysis of symmetric encryption algorithms (block and stream ciphers), RSA encryption and solution of the Elliptic Curve Discrete Log Problem (ECDLP). The attack is based on computation of a unique solution to inverting a function when the given value of the function defines a periodic recurring sequence of low \emph{linear complexity} (LC). The inverses of respective functions solve the symmetric keys in case of symmetric encryption given a known plaintext, plaintext input of the RSA ciphertext without resorting to factoring the modulus, decrypt every encryption of RSA by same public key when one pair $(c,m)$ is known and the discrete log $m$ in the ECDLP given the points $P$ and $[m]P$ on the elliptic curve. The sequences produced by the functions in respective cases have not been studied previously for their LCs. Hence an accidental low complexity values of the sequence in a specific session can be fatal for security of these primitives. The local inversion problem was recently posed and solved by the author in \cite{suleLI1,suleLI2} for maps in binary field case $\ftwo$ by proposing a complete algorithm for the solution which computes all solutions to the inversion. It was shown that the complete algorithm for local inversion involved solving NP-hard problems in general. An incomplete algorithm is also proposed in these references which is promising for solving the local inversion problem in the case when the LC is low. In this paper we extend this incomplete algorithm to general finite fields and explain the above important cryptanalysis applications.  

If $\ff$ is a finite field and $F:\ff^n\rightarrow\ff^n$ is a map. Then given a $y$ in $\ff^n$ in the image of $F$ local inversion of $F$ at $y$ is concerned with finding all $x$ in $\ff^n$ such that $y=F(x)$. An approach to solving this problem was proposed by the author in \cite{suleLI1} for binary field $\ftwo$. Some consequences and an improved incomplete algorithm were announced in \cite{suleLI2}. Previously well known approach for local inversion of maps has been known as TMTO attack which has been well known in cryptanalysis since long \cite{Hell}. However TMTO relies on the collision probability and does not use any properties of finite fields. In this paper we explore wider consequences of this methodology of incomplete algorithm for local inversion and show that formulation of cryptanalysis problems as local inversion gives a remarkable advantage for cryptanalysis in all the above problems of cryptanalysis. The methodology of local inversion by the incomplete algorithm utilizes the well known notion of LC of a sequence generated by the map $F$ recursively for the given value $y$ to find one solution in polynomial time if the LC happens to be of polynomial order $O(n^k)$. In most cases usually it is known that there is a unique solution to the problem hence the inverse obtained is the only solution. Hence this methodology offers a solution in practically feasible time and memory for sufficiently small $k$. However such solutions are only feasible for special cases of data $y$ as an output of the map $F$ which corresponds to a specific session of the primitive. Such low complexity cases of local inversion given $y$ are accidental and possibly of very low density among all possible operational values of the map $F$ or what can be called as average cases of the primitive. Unfortunately as encryption may be carried out using the same private or symmetric key on multiple inputs any one of input producing an output $y$ of low LC shall lead to a fatal breaking of the encryption for all inputs. Hence even if such low complexity cases are rare, once a case is found the private key as well as all encryptions by the key are compromised. Also the probability of accidental occurrence of low complexity cases increases with practical feasibility bounds (on time and memory) allotted for local inversion algorithm. These special cases spell doom on the whole encryption if their density is not exponentially small given these practical bounds. Hence security of encryption based on exponential average case complexity is no more a reliable assurance of security.

\subsection{Relationship with previous work}
The TMTO attack algorithm of \cite{Hell} addressed the problem of local inversion of a map $F:\{0,1\}^n\rightarrow\{0,1\}^m$ for $n\leq m$. The attack still attracts newer research as in \cite{hongsarkar, hays, gango} even after forty years. TMTO however does not utilize the structure of finite fields defining the domain and range of the map. TMTO is essentially based on collision probability and hence the complexity of TMTO is of the order of $O(\sqrt{N})$ where $N=2^n$. Local inversion algorithms of \cite{suleLI1,suleLI2} and those to be presented in this paper on the other hand depend on the structure of the finite field $\ff$ inherent in definition of the map $F:\ff^n\rightarrow\ff^n$. The complexity of local inversion is in terms of $\log N$ where polynomial time is of the order $O((\log N)^k)$. Hence the TMTO is an exponential algorithm and is of a completely different nature of search. The incomplete algorithm for local inversion presented in this paper sets a bound on computation in $O((\log N)^k)$ and determines if there is a solution to the inversion within the bound. Hence for sufficiently small bound such as $k=3$ the computation is equivalent to solving a linear system of size $O(n^k)$ for a reasonably small $k$. This methodology is useful to make estimates of the density of small complexity encryptions (which are weak) and which can be broken in time and memory estimates proportional to solving a linear system of size $O(n^k)$. 

\subsubsection{Local Inversion attacks on RSA}
Cryptanalysis of RSA has always been studied from the viewpoint of factoring the modulus. While major progress has taken place in factorization algorithms the best known of these have sub-exponential complexity in the bit length $l=\log n+1$ of the modulus $n=pq$ where $p,q$ are distinct odd primes \cite{Lenstra,Lenstraetal,Buhleretal}. Local inversion approach to cryptanalysis of RSA on the other hand is a completely different problem of inverting the maps $y=F(x)$. The maps $F$ can be defined in two different situations.
\begin{enumerate}
    \item Inversion of ciphertext $c$ to plaintext $m$, $c=m^e\mod n$. Here the map to be inverted is $c=F(m)$, hence maps $\ftwo^l\rightarrow\ftwo^l$.
    \item Inversion in Chosen Ciphertext Attack (CCA) to find the private key $d$ since $m=c^d\mod n$ assuming both $m,c$ known. Here $F:\ftwo^{\log\phi(n)}\rightarrow\ftwo^l$ since the unknown $d$ belongs to $\zz_{\phi(n)}$. Hence this is the case of local map inversion $F:\ftwo^{(l-1)}\rightarrow\ftwo^l$ which is an embedding.
\end{enumerate}
The incomplete algorithm of local inversion of $c$ to $m$ then determines whether there exist low complexity encryptions $y$ of polynomial order $O(l^k)$ and particularly of practically feasible order $O(l^3)$ and how much size of linear system would be required to break such weak encryptions. Clearly this method of inverting RSA has no theoretical overlap with factorization. However the complexity $O(l^k)$ shall be a concrete comparison with order of complexity of factoring $l$ bit length moduli $n=pq$. The embedding map inversion case also arises in the discrete log computation as discussed next.

\subsubsection{Solution of the ECDLP}
Among the most important problems of cryptanalysis which can be studied using the local inversion methodology is the ECDLP. Here the map $F$ is defined by the relation $Q=[m]P$ where $P,Q$ are given points on the elliptic curve $E(\ff_q)$ and $m$ the unknown  multiplier number. $F$ maps $m$ to $E$. Hence this is not the standard problem of local inversion since bit length of the exponent $m$ the unknown to be solved is less than the bits required for representing a point on $E$. It is assumed that the point $Q$ is in the cyclic subgroup $<P>$. The order of the subgroup is bound by the order of $E$ which is further bound by $q+1+2\sqrt{q}$. Since as a set of points $E$ is embedded in $\ff_q^2$ and $r=q+1+2\sqrt{q}<2q$ the map $F$ can be formulated in bits as a map  $F:\ff^{\log r+1}\rightarrow\ff^{\log(q)+2}$ which is an embedding $F:\ff^n\rightarrow\ff^m$, $n<m$. The complexity of local inversion of such a map as we shall show in a later section is that of simultaneous local inversion of $m-n$ maps $F_i:\ff^n\rightarrow\ff^n,i=1,\ldots(m-n)$ defined by $F$ with a common inverse. Hence the incomplete algorithm allows us to determine the number of weak cases of ECDLP of polynomial size complexity $O(n^k)$. Such a methodology does not seem to have appeared in any of the previous methods of solving the ECDLP. Even the most recent analogous index calculus method of solving the ECDLP has exponential complexity \cite{SATECDLP} despite giving much improvement in computation. Due to wide application of Elliptic Curve Cryptography (ECC) on internet, private networks and the digital currency the solution to ECDLP using the local inversion methodology shall be an important new approach in estimation of security and identification of weak sessions of key exchange in ECC. 

\subsubsection{Differences with LC of stream generators and the Berlekamp Massey attack}
Stream ciphers and block ciphers in counter mode produce periodic pseudorandom (PR) streams with long periods of exponential order in the number $n$ of bits of the input secret key or the seed. Berlekamp Massey (BM) attack \cite{StampLow} on such streams is to model the stream as an output of a linear feedback shift register (LFSR) with a known initial loading thereby allowing recovery of the entire stream used for encryption. BM attack succeeds if the LC (the degree of the minimal polynomial) of the stream is of low order (polynomial order in $n$). A large class of sequences have been discovered in the literature \cite{Nied1, Fang} which have large linear complexities (close to half the period which is of exponential order). However such studies are useful to show if a particular sequence has large LC. In practice when such streams are generated by stream generators each stream is uniquely identified with a secret key $K$ and an initialising seed called $IV$ (which is publicly known). Hence even if BM attack does succeed for one stream (for a known IV) it does not reveal the secret key. Usually the stream generators are designed and $IV$s chosen such that the LC is not small and the attack is defeated. The $k$-linear complexity \cite{Nied2} is another useful concept which shows variation of LC of a sequence relative to close sequences differing in few bits

However, it is important to note that the local inversion attack we propose in this paper is not concerned with the LC of the output stream of a stream generator at all. The attack considers a totally different sequence recursively generated by the map from secret key to the output stream. Hence none of the previous studies of LC of output stream of generators (for specific $IV$) are relevant to determine the LC of the recurrence sequence utilized for local inversion attack. Moreover once the LC of the recurrence is of polynomial order in the key length, the local inversion attack solves for the secret key in polynomial time. Hence the cipher algorithm is broken for all $IV$s which may be used as seeds for the same key. Hence this attack is of very different kind than the BM attack of modeling the output stream by an LFSR. This new attack must be considered as an additional measure of security of PR generators apart from the BM attack on the output streams. 

\subsubsection{Local inversion as a general problem over finite fields}
Inversion of maps is a problem of fundamental interest to Mathematics and Computation and has vast application in Physical, Life and Social Sciences. For instance local inversion can be applied to the problem of solving (rational solutions) of systems of polynomials over finite fields. Inversion of maps and functions has been studied in depth for over a century over continuous domains while over discrete domains such as finite fields, importance and applications of the inversion problem have been slow to come to light. Cryptanalysis happens to be one such vitally important application in which almost all problems are known to be computationally hard. Applications of maps over finite fields arise in several applications such as Biological and Chemical networks and symbolic dynamical systems. In number theory, the map such as the exponential function modulo a prime and in finite fields does not appear to have been studied from the point of view of LC of the recurrence. LC is a characteristic property of primes and generating polynomials of field extensions. Dynamical systems and maps in finite fields exhibit interesting linear representation and its application as shown in \cite{Ramsule1,Ramsule2}. These show that the theory of inversion of maps in finite and discrete domains has important real life applications which are yet to be fully explored. 

\subsection{Organization of the paper}
A brief background of the local inverse problem and its computational solution by an incomplete algorithm is first presented in Section 2. The main difference of material in this paper compared to the previous papers \cite{suleLI1,suleLI2} is the focus on incomplete algorithm and applicability of the theory to general finite fields. Focus of the previous paper \cite{suleLI1} was primarily on the complete algorithm to find all solutions. The complete algorithm is at present possible only for maps over the binary fields $\ftwo$ and involves solving NP hard problems to obtain all solutions. The incomplete algorithm only requires computing a single solution in a periodic sequence and is accomplished by solving a linear system over the field and hence works over any finite field.

The case of embedding maps is then analysed in Section 3 and the incomplete algorithm is extended to solve the problem of local inversion on an embedding.

In section 4, the cases of cryptanalysis of symmetric key algorithms of block and stream ciphers are discussed in terms of local inversion problem using the incomplete algorithm. 

In Section 5, an application of local inversion is presented for cryptanalysis of RSA which shows that RSA encryption can be decrypted by local inversion of an appropriate map without factoring the modulus. It is also shown how in one CCA attack, local inversion leads to breaking of all the encryptions by the same private key. Bounds on the sizes of linear systems to be solved for polynomial order $O(l^k)$ bounds on the complexity are shown where $l$ is the bit length of the modulus. 

Section 6 is devoted to the formulation of the ECDLP over finite fields as a local inversion problem of an appropriate embedding map $F:\ftwo^n\rightarrow\ftwo^m$ over the binary field. Bounds on sizes of linear systems required to be solved if the linear complexity of sequences defined by the map are bounded by polynomial order $O(n^k)$ are presented. Solving DLP on finite fields is then presented as a local inversion problem and appropriate maps to be inverted are derived.

This paper shall focus on the presentation of the theory of incomplete algorithm for local inversion of maps in three central problems of cryptanalysis as discussed above. Computational details and measurement of densities of low LC cases of cryptanalysis problems shall be presented in separate articles.

\section{Local Inversion by an incomplete algorithm}
In this section we gather the background required for presenting the theory. Although there is much overlap of this section with the material in \cite{suleLI1,suleLI2} the case here is over a general finite field while that in these previous papers was for the binary field $\ftwo$. Moreover a recall of the basic background is more convenient for readers instead of referencing back. Consider a map $F:\ff^n\rightarrow\ff^n$ where $\ff$ is a finite field. Given $y$ in $\ff^n$ local inversion problem is to determine all solutions $x$ in $\ff^n$ of the equation $y=F(x)$. A complete algorithm for solving this problem over the binary field $\ftwo$ is presented in \cite{suleLI1,suleLI2} which returns the set of all solutions of the problem (or returns an empty set if there is no solution). An incomplete algorithm takes as an input a bound $M$ on the computation (which bounds number of steps of time and size of memory) and returns a unique solution if any solution is found within this bound on computation. Algorithms for solution to the inversion problem are based on the structure of solutions of the equation 
\beq\label{eqn:fundamental}
y=F(x): x,y\in\ff
\eeq
over a finite field $\ff$. The structure of solutions arises from the dynamical system over $\ff^n$ defined by the map $F$ as its transition function. 

\subsection{Dynamical system defined by $F$ and structure of solutions}
The map $F$ in the equation (\ref{eqn:fundamental}) defines the transition function of the dynamical system
\beq\label{eqn:dynsys}
x(k+1)=F(x(k)),k=0,1,2,\ldots
\eeq
with state $x(k)$ in $\ff^n$. An initial state $x(0)$ defines a unique trajectory through $x(0)$ of the system (\ref{eqn:dynsys}) in $\ff^n$ and also generates the sequence of recurrences (or iterates of $F$)
\[
S(F,x(0))=\{x(0),F(x(0)),F^{(2)}(x(0)),\ldots\}
\]
Hence the system (\ref{eqn:dynsys}) and the sequence $S(F,x(0))$ are frequently referred interchangeably once $x(0)$ is specified. Due to finiteness of the domain $\ff^n$ of the transition map, the trajectories of the dynamical system are limited to only three types, fixed points, periodic points and chains.
\begin{enumerate}
    \item \emph{Fixed points}: Points $x$ in $\ff^n$ which satisfy, $F(x)=x$.
    \item \emph{Periodic points of period $N$}: Points $x$ in $\ff^n$ which satisfy $F^{(N)}(x)=x$. (Hence fixed points are periodic points of period $1$). The trajectory $x(k)$ with initial state $x$ is called a closed orbit of length $N$ through $x$. The trajectory is aslo termed a \emph{periodic orbit} of period $N$.
    \item \emph{Chains of length $l$}: Trajectories $x(k)$ with initial state $z$ in the set called Garden of Eden (GOE) of $F$,
    \[
    \mbox{GOE}=\{z\in\ff^n|z\notin\im F\}
    \]
    and the final state $x(l)$ in a closed orbit or a fixed point.
\end{enumerate}
Basic observations about trajectories of dynamical system (\ref{eqn:dynsys}) are
\begin{prop} Following statements hold
\begin{enumerate}
    \item Every trajectory $x(k)$ of a dynamical system (\ref{eqn:dynsys}) is one of the above three type. 
    \item If $x$, $y$ are distinct periodic points then either they are in the same periodic orbit or have disjoint (non-intersecting) periodic orbits.
    \item A map $F$ is a permutation iff all trajectories are closed orbits and its GOE is an empty set. Hence the closed orbits of a permutation partition $\ff^n$.
\end{enumerate}
\end{prop}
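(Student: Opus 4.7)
The plan is to treat the three items in order, exploiting finiteness of $\ff^n$ throughout. For item (1), I would argue by the pigeonhole principle: since the infinite sequence $x(0), x(1), x(2), \ldots$ lives in the finite set $\ff^n$, there must exist indices $0 \leq i < j$ with $x(i) = x(j)$. Determinism of the recurrence (\ref{eqn:dynsys}) then forces $x(k+j-i) = x(k)$ for all $k \geq i$, so the trajectory is eventually periodic with period dividing $j-i$. If the smallest such $i$ is $0$, then $x(0)$ itself is periodic, and the trajectory is either a fixed point (period $1$) or a closed orbit. If $i > 0$, I would show that walking backwards along the recurrence from $x(0)$ must terminate in a GOE point: the preimage chain $x(0), x(-1), x(-2), \ldots$ cannot be infinite (finite state space plus the constraint that no pre-cycle element can reappear, else it would force $x(0)$ itself to be periodic, contradicting $i > 0$), and its first element must have no preimage, i.e.\ lie in GOE. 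Hence the trajectory is the tail of a chain in the sense defined.

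For item (2), I would suppose $x$ and $y$ are periodic points whose orbits share a common point $z$. Forward determinism of $F$ gives $F^{(k)}(x) = F^{(k)}(y)$ for all $k \geq k_0$ once the orbits align at $z$. Because both $x$ and $y$ are periodic, their orbits are closed cycles, and within a cycle iteration is a bijection onto itself; consequently one can walk backwards along either cycle uniquely to recover $x$ and $y$ from $z$. This forces $x$ and $y$ to lie on the same cycle, so the orbits coincide. Contrapositively, if the orbits of distinct periodic points are not identical, they are disjoint.

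Item (3) is where the algebraic finite-set dichotomy does the work. For the forward direction, if $F$ is a permutation then $F$ is surjective, so $\im F = \ff^n$ and GOE is empty; injectivity lets every point be followed backwards indefinitely along the recurrence, and since $\ff^n$ is finite this backwards walk must close up, forcing every point to be periodic. Combined with item (1), every trajectory is then a closed orbit. For the converse, if GOE is empty then $F$ is surjective, and on the finite set $\ff^n$ surjectivity is equivalent to injectivity, so $F$ is a permutation. The final assertion that closed orbits of a permutation partition $\ff^n$ follows directly from item (2): each element lies in its own closed orbit, and distinct orbits are disjoint.

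The main technical nuisance will be item (1), and specifically the bookkeeping that classifies a non-periodic trajectory as the forward tail of a chain originating in GOE. The key subtlety is ruling out a bi-infinite non-periodic trajectory, which again reduces to the finiteness of $\ff^n$ plus the fact that the pre-cycle portion cannot revisit itself without making the starting point periodic. The remaining arguments are essentially bookkeeping with determinism of $F$ and the pigeonhole principle, with no computation required.
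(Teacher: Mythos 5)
Your proposal is correct and follows essentially the same route as the paper: finiteness/pigeonhole to get ultimate periodicity, the common-point argument to show periodic orbits coincide or are disjoint, and the surjective-iff-injective dichotomy on the finite set $\ff^n$ for the permutation characterization. You are in fact somewhat more careful than the paper in item (1), where you explicitly trace a maximal backward chain to a GOE point to justify that a non-periodic trajectory is part of a chain, a step the paper's proof passes over quickly.
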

\begin{proof}
The proof of these statements follow just from the finiteness of $\ff^n$, because of which every trajectory $x(k)$ is \emph{ultimately periodic}, there exist $p$, $N$ for every $x$ such that the trajectory $x(k)$ with initial state $x$ satifies, $x(k+N)=x(k)$ for $k\geq p$. ($p$ is called the \emph{pre-period} of the trajectory and the smallest $N$ is called the \emph{period}). Hence a trajectory $x(k)$ if not already periodic always has a part in a chain before it reaches a periodic orbit.

Next let $x$, $y$ be distinct periodic points and not in eachother's closed orbit. If $w$ is a common point in their closed orbits, then the closed orbit of $w$ is in closed orbits of both $x$ and $y$ hence their orbits are the same. Hence it follows that closed orbits containing $x$ and $y$ do not intersect.

Finally, a map $F$ is a permutation then it is one to one in $\ff^n$ hence every $y$ has an inverse image $x$ such that $y=F(x)$ hence the GOE is empty. Since all closed orbits are distinct because of the previous statement and isolated points are fixed points these are all the points in $\ff^n$. Hence the trajectories of (\ref{eqn:dynsys}) partition $\ff^n$.
\end{proof}

Above Proposition describes the structure of solutions of the equation (\ref{eqn:fundamental}). First make easy observation
\begin{prop}
Equation $y=F(x)$ has no solution iff $y$ belongs to $GOE$ of $F$.
\end{prop}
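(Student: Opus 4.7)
The plan is to prove this proposition by directly unpacking the definitions, since the statement is essentially a restatement of what the Garden of Eden set means. By the definition given in the excerpt, $\mathrm{GOE} = \{z \in \ff^n \mid z \notin \im F\}$, and by the definition of the image of a map, $y \in \im F$ precisely when there exists $x \in \ff^n$ such that $y = F(x)$. So the chain of equivalences I would write down is: $y = F(x)$ has no solution $\iff$ no $x \in \ff^n$ satisfies $y = F(x)$ $\iff$ $y \notin \im F$ $\iff$ $y \in \mathrm{GOE}$.

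For the two directions explicitly: first I would show the forward direction by contraposition, assuming $y \notin \mathrm{GOE}$, i.e., $y \in \im F$, which by definition of image produces some $x \in \ff^n$ with $F(x) = y$, contradicting the hypothesis that no solution exists. For the reverse direction I would again argue by contraposition: if $y = F(x)$ admits some solution $x \in \ff^n$, then $y \in \im F$, so $y \notin \mathrm{GOE}$.

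There is no real obstacle here; the proposition is purely a reformulation of the definition of $\mathrm{GOE}$ in the language of equation solvability. The only thing worth remarking on is that the statement implicitly relies on the standing assumption that $x$ is sought in $\ff^n$ (the domain of $F$), which matches the formulation of equation (\ref{eqn:fundamental}). I would therefore keep the proof to two or three short sentences and move on, as the content is entirely notational.
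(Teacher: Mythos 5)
Your proof is correct and matches the paper's treatment: the paper offers no proof at all, presenting the proposition as an ``easy observation'' that follows immediately from the definition $\mathrm{GOE}=\{z\in\ff^n\mid z\notin\im F\}$, which is exactly the definitional unpacking you carry out.
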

Next, a theorem on the structure of all solutions of (\ref{eqn:fundamental}).

\begin{theorem}\label{strofsolns} Following statements hold
\begin{enumerate}
    \item $F(x)=y$ has a solution $x$ in a periodic orbit $P$ iff $y$ belongs to $P$. Such a periodic orbit and hence also the solution $x$ in $P$ is unique. 
    \item All other possible solutions belong to the chains $F^{k}(z),k\geq 1$ for $z$ in the GOE of $F$.
    \item If $y$ is neither in a periodic orbit nor in the GOE, then solutions arise in some of the segments $F^{(k)}(z),1<k\leq (l-1)$ for some $l$ for some of the $z$ in the GOE.
    \item $y=F(x)$ has a unique solution iff $y$ belongs exclusively to a chain segment of a unique point in GOE or to a unique periodic orbit 
\end{enumerate}
\end{theorem}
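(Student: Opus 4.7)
The approach is to use Proposition 1, which classifies every trajectory of (\ref{eqn:dynsys}) into a periodic orbit or a chain launched from a point of GOE, in order to localize all preimages of $y$. The set of solutions of (\ref{eqn:fundamental}) is precisely the in-neighborhood of $y$ in the functional digraph of $F$ on $\ff^n$, and by Proposition 1 this in-neighborhood must lie in the union of the (at most one) periodic orbit through $y$ and the chains from GOE that pass through $y$. The four parts of the theorem then amount to making this dichotomy precise.

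For (1), I would first show forward invariance: if $P$ is a periodic orbit of period $N$, then $F|_P$ is the cyclic permutation $p_i \mapsto p_{i+1 \bmod N}$, so $y = F(x) \in P$ whenever $x \in P$. The converse is constructive: if $y \in P$, the identity $y = F^{(N)}(y) = F(F^{(N-1)}(y))$ exhibits $F^{(N-1)}(y)$ as a preimage of $y$ inside $P$. Uniqueness of $P$ is then Proposition 1 part 2 (disjointness of distinct periodic orbits), and uniqueness of $x$ inside $P$ follows because $F|_P$ is a bijection. For (2), any solution $x$ not lying on a periodic orbit must sit on the pre-periodic part of its trajectory, which by Proposition 1 starts at some $z \in \mathrm{GOE}$; hence $x = F^{(k)}(z)$ for some $k \geq 0$. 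Statement (3) then combines these observations: when $y$ is neither in an orbit nor in GOE it is in the image of $F$, and every preimage $x$ lies outside every periodic orbit (else $y = F(x)$ would be on one by forward invariance), so $x$ sits on a chain from some $z \in \mathrm{GOE}$ and $y = F^{(k+1)}(z)$ lies at an interior position (index $\geq 1$) of that chain.

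The main obstacle is (4), where I would rephrase ``unique solution'' as ``$y$ has in-degree exactly one in the functional digraph of $F$''. By (1)--(3) the preimages of $y$ split cleanly into those inside the periodic orbit through $y$ (at most one, by the cyclic bijection) and those contributed by distinct chains from GOE merging into $y$ (each such chain contributes one preimage, namely its predecessor of $y$). A single in-neighbor therefore occurs in exactly two disjoint configurations: either (a) $y$ lies on no periodic orbit and is traversed by exactly one chain from one $z \in \mathrm{GOE}$, so the unique preimage is its chain predecessor, or (b) $y$ lies on a periodic orbit and no chain from GOE merges at $y$, so the unique preimage is its orbital predecessor. The delicate point I would record carefully is that a point of a periodic orbit can still be the landing site of external chains from GOE; the word ``exclusively'' in the theorem statement is needed precisely to rule out this hybrid configuration, in which an orbital predecessor and one or more chain predecessors coexist as distinct solutions.
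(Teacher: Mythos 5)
Your overall strategy --- reading off all four parts from the trajectory classification of Proposition~1 and the in-neighborhood structure of the functional digraph of $F$ --- is exactly the route the paper takes (its proof is deferred to an appendix that reproduces Lemma~1 of the cited earlier work). Your arguments for parts (1)--(3) are correct and the same in substance: forward invariance of a periodic orbit, the explicit preimage $F^{(N-1)}(y)$, disjointness of distinct orbits for uniqueness, and the placement of every non-periodic solution on a chain issuing from the GOE.

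In part (4), however, one counting step does not hold as you state it. You assert that the chain-preimages of $y$ are contributed one per chain from the GOE passing through $y$, and you then identify ``exactly one in-neighbour'' with ``exactly one such chain'' in your configuration (a). But two chains launched from distinct GOE points $z_1\neq z_2$ can merge strictly before $y$ (for instance $F(z_1)=F(z_2)=w$ and $F(w)=y$): then $y$ lies on the chain segments of two GOE points yet has the single preimage $w$. The number of chain-preimages of $y$ is the number of \emph{distinct predecessors} of $y$ along chains, not the number of GOE points whose chains reach $y$, so your forward implication (unique solution $\Rightarrow$ a unique GOE point) fails in this configuration. To be fair, this imprecision is already present in the theorem's own wording of statement (4); the robust formulation is that $y=F(x)$ has a unique solution iff $y$ has exactly one predecessor, counted over its (at most one) periodic orbit together with the distinct chain-predecessors merging at $y$. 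You do correctly isolate the other delicate case --- a periodic point that is simultaneously a landing site of external chains --- which is the configuration the word ``exclusively'' is meant to exclude.
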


Proof of the theorem is presented in the appendix as it is almost same as that presented as proof of Lemma 1 in \cite{suleLI1} except for the fact that this proof is valid for any finite field.

\subsection{Computation of solutions}
From the theorem on structure of solutions of the equation (\ref{eqn:fundamental}) it follows that any solution of (\ref{eqn:fundamental}) over a finite field $\ff$ is either in one unique periodic orbit of trajectories of the dynamical system (\ref{eqn:dynsys}) or in a chain (a trajectory starting from some initial point in the GOE). It is shown in \cite{suleLI2} for the map $F$ over $\ftwo$ that computation of GOE is an NP-hard problem. Hence computation of solutions in chains is beyond the goal of an incomplete algorithm with polynomial size bounds on computation. Although this complexity is established only over $\ftwo$, the situation can be predicted to be even harder for general finite fields. In fact an algorithm for solving this problem of computation of GOE of a map $F$ on finite fields other than $\ftwo$ does not seem to have been discussed in previous literature.

\subsubsection{Computation of solution in a periodic orbit}
We now assume that $y$ in (\ref{eqn:fundamental}) belongs to a periodic orbit of the iterates of (\ref{eqn:dynsys}) or that the sequence $S(F,y)$ is periodic
\beq\label{Sequence}
S(F,y)=\{y,F(y),F^{(2)}(y),\ldots\}
\eeq
Let $N$ be the period of this sequence or the period of the trajectory with initial condition $y$. Hence $N$ is the smallest index such that
\[
F^{(N)}(y)=y
\]
Hence $x=F^{(N-1)}(y)$ is the solution of the equation. Thus computing the unique solution in the periodic orbit $S(F,y)$ it is sufficient to
\begin{enumerate}
\item Determine whether the sequence $S(F,y)$ (trajectory of $y$) is periodic.
    \item Compute the period $N$ of the tratejctory $S(F,y)$ of $y$.
    \item Evaluate the $(N-1)$-th compositional power of $F$ at $y$. 
\end{enumerate}
Even when $N$ is of exponential order in $n$, the compositional power $F^{(N-1)}(y)$ can be computed in polynomial time $O(n^k)$. This is possible by the repeated compositional squaring shown for maps over $\ftwo$ in \cite{suleLI1} (appendix) analogous to repeated squaring in groups to compute powers of exponential order. Same proof can be extended for any finite field. Hence to find the solution $x$ it is required to estimate complexities of determining whether the sequence $S(F,y)$ is periodic and computing the period $N$. 

\subsubsection{The practical hurdle in computation}
We now highlight the computational hurdle in computing the period. In practice the numbers of terms of the sequence $S(F,y)$ available for computation are limited (polynomial size $O(n^k)$ for sufficiently small size $k$ as is determined by practical feasibility). Hence the above problems of determining periodicity of $S(F,y)$ and computing the period $N$ are required to be solved for a subsequence of $S(F,y)$ with a limited number of terms. Both of these problems are linked with the concept of linear complexity of $S(F,y)$. This is discussed next.

\subsection{Linear complexity and the solution of $y=F(x)$}
It turns out that for computing the solution $x$ of (\ref{eqn:fundamental}) it is not necessary to compute the sequence terms upto $(N-1)$ for a period $N$ and the compositional power $F^{(N-1)}(y)$. Instead, it is sufficient to discover the \emph{Linear Complexity} (LC) and the \emph{minimal polynomial} of the sequence $S(F,y)$ from the limited terms of the sequence. A greatest advantage in such an alternative as we show below is that if the LC of $S(F,y)$ is of polynomial order $O(n^k)$ then it can be computed along with the minimal polynomial in polynomial time and this allows computation of the solution $x$ in polynomial time directly from the minimal polynomial. Alternatively it is shown that when LC is of polynomial order, the period $N$ can be computed in polynomial time as the order of the minimal polynomial. Moreover the compositional power $F^{(N-1)}(y)$ can also be found in polynomial time which is the solution of the inverse. Hence then these two ways of computing the solution are computationally polynomial time equivalent. However the direct computation of $x$ from the minimal polynomial shown and used in the algorithm is practically much advantageous. This is the big picture of the idea of this paper.

\subsubsection{Linear recurrence relation satisfied by $S(F,y)$}
To describe the above observation it is necessary to recall the concept of linear recurrence relation satisfied by a periodic sequence over a finite field $\fq$ and a polynomial associated with such a relation. Linear recurrence relations in sequences have been studied since long. The concept of \emph{linear complexity} (LC) and the modeling of a sequence by an LFSR of length equal to LC using the Berlekamp-Massey algorithm have been known in the theory of stream ciphers \cite{Rueppel, StampLow}. We refer the reader to \cite{Gologong} for an immediate relevance to periodic sequences while an exposition of a general theory is available in \cite{LidlNied}. Although linear recurrence and minimal polynomials of periodic sequences have been studied since long their application for local inversion problem does not seem to have caught attention of the researchers in the past as evident from a near total absence of reference to this problem. While inversion of a map is an objective of TMTO attack, the TMTO algorithm does not utilize the structure of finite field of the domain hence none of the mathematical constructs such as minimal polynomial are applicable for TMTO attack.

\subsection{Minimal polynomial, order and period of $S(F,y)$}
The sequence $S(F,y)$ is said to satisfy a recurrence relation if there exist $j_0\geq 0$, $m\geq 1$ and constants $\al_0,\al_1,\ldots,\al_{(m-1)},\al_m\neq 0$ such that the following relations hold
\beq\label{LRR}
\al_mF^{(k+m)}(y)=\sum_{i=0}^{(m-1)}\al_iF^{(k+i)}(y)
\eeq
for $k=j_0,j_0+1,j_0+2,\ldots$. If such a recurrence relation is found, a polynomial $p(X)$ can be associated with the relation as follows
\[
p(X)=\al_mX^m-\sum_{i=0}^{(m-1)}\al_iX^{i}
\]
When $\al_m\neq 0$, the degree $m$ of the polynomial is also called the \emph{degree} of the recurrence relation (\ref{LRR}). The polynomial 
\[
\phi(X)=(1/\al_m)p(X)
\]
is called a \emph{characteristic polynomial} of $S(F,y)$. We can also say that the recurrence relation (\ref{LRR}) with $\al_m=1$ is defined by the polynomial $\phi(X)$. (Hereafter we shall drop the term linear and simply call (\ref{LRR}) as a recurrence relation). A characteristic polynomial of smallest degree $m$ denoted $m(X)$ is unique, for if not, (\ref{LRR}) shows that there is a recurrence relation of degree smaller than degree $m$. Hence the polynomial is called the \emph{minimal polynomial} of $S(F,y)$, which is monic and defines the least degree recurrence relation. We shall denote the minimal polynomial $m(X)$ as follows, retaining the notation for the co-efficients $\al_i$ for $i\geq 0$ as
\beq\label{minpoly}
m(X)=X^m-\sum_{i=0}^{m-1}\al_0X^i
\eeq
Formally we can denote the indeterminate $X$ and polynomials in $\ff[X]$ as linear operations on the sequence $S(F,y)$ induced by the rules,
\[
X^k(y)=F^{(k)}(y),\mbox{ }(aX^k+bX^l)(y)=aF^{(k)}(y)+bF^{(l)}(y),a,b\in\ff
\]
Then it follows that any characteristic polynomial $\phi(X)$ satisfies
\[
\phi(X)(F^{(k)}(y))=0\forall k\geq 0
\]
which is denoted in short as $\phi(S(F,y))=0$. An important characterization of the minimal polynomial is given by the following well known Proposition \cite{Gologong, suleLI1}.

\subsubsection{Relations between order and period}
For a polynomial $f(X)$ in $\ff[X]$ which satisfies $f(0)\neq 0$ there exists a number $N$ which is the smallest number such that $f(X)|(X^N-1)$ ($f(X)$ divides $(X^N-1)$). This number is called the \emph{order} of $f(X)$ over $\ff$, denoted
    \[
    N=\mbox{order}\;f(X)
    \]
Order of a polynomial has well known relation with the orders of roots of its irreducible factors over its splitting field \cite{LidlNied}.

\begin{prop}\label{Existenceofmp}
The sequence $S(F,y)$ is periodic iff it has a minimal polynomial which divides any of its characteristic polynomials. The minimal polynomial satisfies $\al_0\neq 0$ (equivalently $m(0)\neq 0$) and the period $N$ of $S(F,y)$ is the order of the minimal polynomial.
\end{prop}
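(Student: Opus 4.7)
The plan is to decompose the proposition into four assertions and handle each with the polynomial-action formalism set up earlier: (a) periodicity of $S(F,y)$ is equivalent to the existence of a minimal polynomial, (b) $m(X)$ divides every characteristic polynomial, (c) $m(0)\neq 0$ when $S(F,y)$ is periodic, and (d) the period of $S(F,y)$ equals the order of $m(X)$. For the forward direction of (a), pure periodicity with period $N$ immediately exhibits $X^N-1$ as a characteristic polynomial, so the set of characteristic polynomials is nonempty and a least-degree monic element $m(X)$ exists. For (b), given any characteristic polynomial $p(X)$, I would invoke Euclidean division $p(X)=g(X)m(X)+r(X)$ with $\deg r<\deg m$, apply both sides to $S(F,y)$, and use linearity of the action to obtain $r(X)(S(F,y))=0$; minimality of $m(X)$ then forces $r=0$.

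The main step is (c), for which I would argue by contradiction. If $m(0)=0$, factor $m(X)=X\cdot q(X)$ with $\deg q=m-1$. The shift identity $m(X)(F^{(k)}(y))=q(X)(F^{(k+1)}(y))$ then converts the vanishing of $m(X)$ at every base point into $q(X)(F^{(l)}(y))=0$ for all $l\geq 1$. Pure periodicity with period $N$ now gives $q(X)(F^{(0)}(y))=q(X)(F^{(N)}(y))=0$ as well, so $q(X)$ would be a characteristic polynomial of strictly smaller degree than $m(X)$, contradicting minimality. This is the only step that uses the strength of purely, rather than merely eventually, periodic, and is the one I expect to be the most delicate.

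For the backward direction of (a), once $m(0)\neq 0$ the order $N$ of $m(X)$ is well-defined, and $m(X)\mid X^N-1$ gives $(X^N-1)(S(F,y))=0$, hence periodicity. Finally (d) follows by a two-way comparison: if $P$ is the period, then $X^P-1$ is characteristic, so $m(X)\mid X^P-1$ and therefore $N\leq P$ by minimality of the order; conversely $m(X)\mid X^N-1$ yields the recurrence $F^{(k+N)}(y)=F^{(k)}(y)$ for all $k\geq 0$, forcing $P\mid N$; together, $P=N$. Apart from step (c), everything reduces to routine Euclidean-division and order-of-polynomial bookkeeping.
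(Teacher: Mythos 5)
The paper does not actually contain a proof of this proposition --- it is explicitly omitted with a pointer to Proposition 1 of \cite{suleLI1} and a remark that the argument carries over from $\ftwo$ to a general finite field. So there is nothing in-text to compare against; what you have written is a self-contained replacement, and it is correct. Your decomposition into (a)--(d) is the standard one (it is essentially the Lidl--Niederreiter treatment of linear recurring sequences): the annihilator argument via Euclidean division for (b), the factorization $m(X)=Xq(X)$ with the shift-back using pure periodicity for (c), and the two-way divisibility comparison for (d) are all sound, and you correctly identify (c) as the only step where pure (rather than eventual) periodicity is genuinely used. Two small points worth making explicit if you write this up. First, the paper's definition of a recurrence relation permits it to hold only from some index $j_0\geq 0$ onward, so a characteristic polynomial a priori only annihilates a tail of $S(F,y)$; for a purely periodic sequence you should note that tail-annihilation propagates back to $k=0$ by exactly the same shift-back device you use in (c), after which your divisibility argument applies verbatim. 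Second, the backward implication of the stated ``iff'' is only true with the hypothesis $m(0)\neq 0$ included (every $S(F,y)$ over a finite field is eventually periodic and so always has a minimal polynomial dividing its characteristic polynomials); you implicitly read the statement that way, which is the only sensible reading, but it deserves a sentence.
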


The proof is given in \cite{suleLI1} as proof of Proposition 1 hence is omitted. It is stated there only over the field $\ftwo$ however it works for any finite field with very minor modifications.

\subsection{Solution of the equation $y=F(x)$}
An important observation relating the minimal polynomial of a periodic $S(F,y)$ to the unique solution $x$ of $\ref{eqn:fundamental}$ is given by

\begin{theorem}\label{Solutioninperiodicorbit}
Let $S(F,y)$ be a periodic sequence and $m(X)$ as described in (\ref{minpoly}) be its minimal polynomial. Then there is a unique solution to $F(x)=y$ in $S(F,y)$ given by
\beq\label{solution}
x=(1/\al_0)[F^{(m-1)}(y)-\sum_{i=1}^{(m-1)}\al_iF^{(i-1)}(y)]
\eeq
\end{theorem}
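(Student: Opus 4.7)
The plan is to exhibit the claimed $x$ as $F^{(N-1)}(y)$, where $N$ is the period of $S(F,y)$, and then use the polynomial operator formalism of Section 2.3 together with Proposition \ref{Existenceofmp} to rewrite $F^{(N-1)}(y)$ in terms of only the first $m$ iterates of $F$ at $y$. Uniqueness is already granted by Theorem \ref{strofsolns}, part 1: within a periodic orbit the preimage of $y$ under $F$ is unique, and it is the element one step back along the orbit, namely $F^{(N-1)}(y)$.

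The key algebraic step is to invert $X$ modulo $m(X)$ in $\ff[X]$. From Proposition \ref{Existenceofmp} we have $\alpha_0 = m(0) \neq 0$, so rewriting $m(X)=X^m-\sum_{i=0}^{m-1}\alpha_i X^i$ as
\[
X\Bigl(X^{m-1}-\sum_{i=1}^{m-1}\alpha_i X^{i-1}\Bigr)\equiv \alpha_0\pmod{m(X)}
\]
yields
\[
X^{-1}\equiv \frac{1}{\alpha_0}\Bigl(X^{m-1}-\sum_{i=1}^{m-1}\alpha_i X^{i-1}\Bigr)\pmod{m(X)}.
\]
Next, Proposition \ref{Existenceofmp} gives $N=\mathrm{order}\,m(X)$, so $m(X)\mid X^N-1$, hence $X^{N-1}\equiv X^{-1}\pmod{m(X)}$, and the same congruence class representative for $X^{-1}$ computed above represents $X^{N-1}$.

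Finally I would invoke the linear operator convention $X^k(y)=F^{(k)}(y)$ extended $\ff$-linearly to polynomials: since the minimal polynomial annihilates $S(F,y)$, that is $m(X)(y)=0$, any two polynomials congruent modulo $m(X)$ act identically on $y$. Applying this to the congruence $X^{N-1}\equiv (1/\alpha_0)(X^{m-1}-\sum_{i=1}^{m-1}\alpha_i X^{i-1})\pmod{m(X)}$ and evaluating at $y$ yields exactly the formula (\ref{solution}), while the left-hand side is $F^{(N-1)}(y)$, the unique preimage of $y$ in the periodic orbit.

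The only step that needs any care is the justification that $p(X)\equiv q(X)\pmod{m(X)}$ implies $p(X)(y)=q(X)(y)$; this is a direct consequence of $m(X)(y)=0$ and the fact that the operator action of $X$ on $S(F,y)$ extends $\ff$-linearly to all of $\ff[X]$, so that multiplication by $m(X)$ in $\ff[X]$ corresponds to applying the zero operator to $y$. Once this is stated, the rest is a short computation, and there is no nonlinearity issue because we never try to push $F$ through a linear combination outside the sequence: the whole manipulation lives in the quotient ring $\ff[X]/(m(X))$ acting on the fixed vector $y\in\ff^n$.
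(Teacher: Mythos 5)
Your proposal is correct and follows essentially the same route as the paper's argument (which identifies the unique preimage in the periodic orbit as $F^{(N-1)}(y)$, with $N$ the order of $m(X)$, and uses the recurrence together with periodicity to rewrite it in terms of the first $m$ iterates); your quotient-ring phrasing of inverting $X$ modulo $m(X)$ is just a cleaner packaging of the same computation. You also correctly handle the one delicate point, namely that $F$ is nonlinear so one must exhibit $x$ as an actual term of the orbit rather than apply $F$ to the linear combination, relying on the fact that $m(X)$ annihilates every shift $F^{(k)}(y)$ of the sequence.
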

Proof is presented in the appendix. It is essentially a reproduction of the proof of Theorem 1 stated in \cite{suleLI1}. While it was stated earlier only for the field $\ftwo$, present proof works for any $\ff$.

\subsubsection{Linear complexity and computation of the minimal polynomial}
From Theorem (\ref{Solutioninperiodicorbit}) it follows that the unique solution $x$ of (\ref{eqn:fundamental}) in the periodic orbit containing $y$ is obtained once the unique minimal polynomial $m(X)$ is computed from the sequence $S(F,y)$. We first state the theoretical result in this connection as stated in \cite{suleLI1} for general fields and then take the problem of constructing an incomplete algorithm to set up the computation to solve for $m(X)$ in the practical situation when only a partial sequence $S(F,y)$ is specified.

A well known algorithm for computation of the minimal polynomial of a sequence is the Berlekamp-Massey algorithm. Its role in the cryptanalysis of stream ciphers has been well known \cite{Rueppel}. However we shall present the computation in terms of the linear system involving the Hankel matrix defined by the sequence $S(F,y)$. If an arbitrary number of terms in the sequence are available the Hankel matrix of size $m$ starting from the term $y$ is given by
\beq\label{hankelmatrix}
H_{m+j}=
\barr{llll}
y & F(y) & \ldots & F^{(j+(m-1))}(y)\\
F(y) & F^{(2)}(y) & \ldots & F^{(j+m)}(y)\\
\dots & \vdots &  & \vdots\\
F^{(j+(m-1))}(y) & F^{(j+m)}(y) &\ldots & F^{(j+(2m-2))}(y)
\earr
\eeq
Following proposition then gives a criterion and method of computation of the minimal polynomial of $S(F,y)$. This is reproduced from \cite{suleLI1}. The proof is omitted but the same proof is valid for any field $\ff$ (while it is considered only for $\ftwo$ in the previous paper).

\begin{prop}\label{Prop:Hankel}
Let $S(F,y)$ be periodic then it has a minimal polynomial of degree $m$ iff
\beq\label{rankcond}
\rank H(m+j)=\rank H(m)=m
\eeq
for all $j=1,2,\dots$. The co-efficient vector 
\[
\hat{\alpha}=(\al_0,\ldots,\al_{(m-1)})^T
\]
of the minimal polynomial is the unique solution of
\beq\label{Hankeleqn}
H(m)\hat{\alpha}=h(m+1)
\eeq
where
\[
h(m+1)=[F^m(y),F^{(m+1)}(y),\ldots,F^{(2m-1)}(y)]^T
\]
\end{prop}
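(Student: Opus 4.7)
The plan is to prove both implications of the equivalence and then extract the linear-system formula. For each $k\geq 0$, write $c_k$ for the column whose entries are the consecutive sequence terms $F^{(k)}(y), F^{(k+1)}(y),\ldots$, truncated to the row-depth of the ambient Hankel matrix, so that $H(m+j)$ has columns $c_0, c_1, \ldots, c_{m+j-1}$; crucially, the top $m$ rows of $c_0, \ldots, c_{m-1}$ form the matrix $H(m)$ itself.

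For the direction ($\Rightarrow$), assume the minimal polynomial has degree $m$ as in (\ref{minpoly}). The scalar recurrence $F^{(k+m)}(y) = \sum_{i=0}^{m-1}\alpha_i F^{(k+i)}(y)$, valid for every $k\geq 0$, lifts to the column identity $c_{k+m} = \sum \alpha_i c_{k+i}$, so every $c_\ell$ with $\ell \geq m$ lies in $\text{span}(c_0,\ldots,c_{m-1})$, giving $\rank H(m+j)\leq m$. For the opposite inequality, suppose $\sum \beta_i c_i = 0$ nontrivially; reading row by row yields $\sum \beta_i F^{(k+i)}(y) = 0$ for $k=0,\ldots,m-1$, and inducting forward via the known degree-$m$ recurrence extends this to all $k\geq 0$, producing a characteristic polynomial of degree less than $m$ and contradicting minimality.

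For the direction ($\Leftarrow$), assume $\rank H(m+j) = \rank H(m) = m$ for every $j\geq 1$. The top $m$ rows of $c_0,\ldots,c_{m-1}$ already form the invertible block $H(m)$, so these columns remain linearly independent inside every $H(m+j)$. Combined with $\rank H(m+j) = m$, this forces a unique expansion $c_m = \sum_{i=0}^{m-1}\alpha_i c_i$ in each $H(m+j)$; uniqueness (pinned down already by the top $m$ rows) guarantees that the same coefficients $\alpha_i$ appear for every $j$. Reading off successive rows of this identity across all $j$ supplies the recurrence $F^{(k+m)}(y) = \sum \alpha_i F^{(k+i)}(y)$ at every shift $k\geq 0$, and independence of $c_0,\ldots,c_{m-1}$ precludes any characteristic polynomial of smaller degree, so $p(X) = X^m - \sum \alpha_i X^i$ is the minimal polynomial. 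Restricting the same identity to its top $m$ rows recovers the linear system (\ref{Hankeleqn}), which has the unique solution $\hat{\alpha}$ because $H(m)$ has full rank $m$.

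The main obstacle is the extension step in the ($\Leftarrow$) direction: promoting the recurrence from the initial block of indices to every $k\geq 0$. The strong rank hypothesis $\rank H(m+j) = m$ for all $j$ (rather than only $j=1$) is essential here, and the cleanest tool is the uniqueness observation that $\hat{\alpha}$ is already fixed by the $H(m)$ block, which avoids delicate ad hoc coefficient substitutions when one otherwise tries to compare $c_{m+j} = \sum \delta_i^{(j)} c_i$ across different $j$.
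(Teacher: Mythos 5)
Your proof is correct, and it is worth noting that the paper itself offers no proof of this proposition to compare against: it says only that ``the proof is omitted but the same proof is valid for any field $\ff$'' and defers to the earlier reference \cite{suleLI1}. Your write-up therefore supplies a justification the paper does not contain, and it does so by the standard route: a column-span analysis of the block Hankel matrices, with each entry $F^{(k)}(y)$ treated as a vector in $\ff^n$ so that $H(m)$ is an $mn\times m$ matrix of full column rank rather than a literally ``invertible'' square block --- that is the one place your wording should be adjusted, since uniqueness of the solution of (\ref{Hankeleqn}) comes from full column rank, not invertibility. Both directions are sound: the forward direction correctly combines the span-collapse of the columns $c_\ell$, $\ell\geq m$, with the induction showing that a nontrivial dependency among $c_0,\dots,c_{m-1}$ propagates to every shift and hence yields a characteristic polynomial of degree below $m$; the converse correctly isolates the key uniqueness point, namely that the expansion coefficients of $c_m$ are already pinned down by the top $m$ block rows and so cannot vary with $j$, which is exactly what lets the recurrence be read off at every shift $k$. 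The only detail you should make explicit is that the paper's definition of a characteristic polynomial only requires the recurrence to hold for $k\geq j_0$ for some $j_0\geq 0$; since $S(F,y)$ is assumed periodic, any such eventual recurrence in fact holds for all $k\geq 0$ (replace $k$ by $k+tN$ for a period $N$ and suitable $t$), and this small observation is needed both to assert in the forward direction that the minimal polynomial's recurrence is valid from $k=0$, and in the converse to rule out lower-degree characteristic polynomials whose recurrence might a priori begin only at some $j_0>0$.
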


\subsection{An incomplete algorithm for computing a solution}
Now we come to the practical issue of computing the minimal polynomial. In practice the sequence $S(F,y)$ can never be specified completely over one period, because its period $N$ is of exponential order in $n$. Hence we are really not sure whether the minimal polynomial computed from equation (\ref{Hankeleqn}) by checking the condition (\ref{rankcond}) at $m$ correctly represents the complete sequence $S(F,y)$. However when the minimal polynomial is correct to represent this sequence then the inverse $x$ computed using the formula (\ref{solution}) is the correct local inverse of $y$ which is same as saying that $x$ verifies $y=F(x)$. This is the basis of the incomplete algorithm to find the local inverse in the periodic orbit of $y$ described below. 

\begin{algorithm}\label{IncompleteAlgo}
\caption{Incomplete algorithm to find the unique solution in periodic orbit}
\begin{algorithmic}[1]
\State\textbf{Input}: $y$ in $\ff^n$, $M$ an upper bound of polynomial order $O(n^r)$ (for length of the sequence $S(F,y)$) for $r$ decided based on practical feasibility.
\State\textbf{Output}: One solution of $F(x)=y$ in the periodic orbit $S(F,y)$ if one exists with minimal polynomial of degree $m\leq\lfloor M/2\rfloor$ otherwise returns that there is no conclusion.
\Procedure{Solution\_in}{$S(F,y)$}
\State Compute the sequence $S(F,y)$, $\{F^{(k)}(y)\}$ for $k=0,1,\ldots,M$.
\If {at any $k<M$, $F^{(k)}(y)=y$} \State Return $x=F^{(k-1)}(y)$ is the solution. 
\Else
\State Set $m=\lfloor M/2\rfloor$.
\Repeat
\State Compute matrices $H(m)$, $H(m+1)$ (as shown in (\ref{hankelmatrix})).
\State Compute $\rank H(m)$, $\rank H(m+1)$.
\If{
\[
m=\rank H(m)=\rank H(m+1)=m
\]
}
\State Compute the minimal polynomial co-efficients $\hat{\alpha}$ in (\ref{Hankeleqn}).
\State Compute solution $x$ as in (\ref{solution}).
\If {$F(x)=y$}
\State \% Verification to avoid false positive solutions.
\State Return: Solution $x$
\EndIf
\EndIf
\If{$\rank H(m)<\rank H(m+1)$}
\State Return: No conclusion within the bound $M$, ($\mbox{LC}>\lfloor M/2\rfloor$).
\EndIf
\If{$\rank H(m)=\rank H(m+1)<m$}
\State $m\leftarrow m-1$.
\EndIf
\Until{$m=1$}
\State Return: No conclusion within the bound $M$, ($\mbox{LC}>\lfloor M/2\rfloor$).
\EndIf
\EndProcedure
\end{algorithmic}
\end{algorithm}

We shall formally state the definition of minimal polynomial of $S(F,y)$ computed from a subsequence of $M$ terms.

\begin{definition}
Consider the subsequence of $S(F,y)$ given upto $M$ terms.
\[
\{y,F(y),F^{(2)}(y),\ldots,F^{(M-1)}(y)\}
\]
If the rank condition (\ref{rankcond}) holds at $m\leq\lfloor M/2\rfloor$ and the solution $x$ obtained in (\ref{solution}) satisfies $y=F(x)$ then the subsequence is said to have the minimal polynomial of degree $m$.
\end{definition}
 
\begin{theorem}
    If the sub-sequence $S(F,y)$ given up to $M$ terms where $M$ has polynomial order $O(n^k)$ has a minimal polynomial of degree $m\leq\lfloor M/2\rfloor$ then one solution $x$ of (\ref{eqn:fundamental}) can be computed in polynomial time
    \end{theorem}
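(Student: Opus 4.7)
The plan is to carry out a step-by-step complexity analysis of Algorithm \ref{IncompleteAlgo}, invoking Theorem \ref{Solutioninperiodicorbit} and Proposition \ref{Prop:Hankel} for correctness. Under the hypothesis, the sub-sequence has a minimal polynomial of degree $m \leq \lfloor M/2 \rfloor$, so the rank test (\ref{rankcond}) is verified at some iteration of the main loop; it therefore suffices to bound the work performed up to and including that successful iteration and to show each component is polynomial in $n$.

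First I would compute the iterates $y, F(y), F^{(2)}(y), \ldots, F^{(M-1)}(y)$, one forward application of $F$ at a time. Under the standing assumption that $F$ is computable in time polynomial in $n$ (which holds for all the cryptographic maps discussed in the paper), the $M = O(n^k)$ evaluations cost $O(n^{k+c})$ field operations in total for some constant $c$, and the stored sequence occupies $O(Mn) = O(n^{k+1})$ field elements.

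Next I would assemble the Hankel matrices $H(m')$ and $H(m'+1)$ from (\ref{hankelmatrix}) for the current candidate degree $m'$, starting at $m' = \lfloor M/2 \rfloor$ and decrementing by one when the algorithm detects $\rank H(m') = \rank H(m'+1) < m'$. Since each entry of these matrices is an already computed vector in $\ff^n$, assembly costs nothing beyond indexing. To compute ranks and solve (\ref{Hankeleqn}), I would flatten each vector entry into its $n$ scalar coordinates, producing genuinely scalar matrices over $\ff$ of dimension $O(m'n) \times m'$; Gaussian elimination then returns both ranks and the solution $\hat{\alpha}$ in time polynomial in $m'$ and $n$. Because $m'$ decreases monotonically from $\lfloor M/2 \rfloor$ and the true minimal degree $m$ is reached after at most $M/2 = O(n^k)$ iterations, the cumulative loop cost remains polynomial in $n$.

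Once the coefficient vector $\hat{\alpha}$ is in hand, formula (\ref{solution}) gives $x$ as a linear combination of $m$ previously computed iterates at a cost of $O(mn)$ field operations, and a single verification $F(x) = y$ rules out false positives in polynomial time. Theorem \ref{Solutioninperiodicorbit} certifies that the returned $x$ indeed solves (\ref{eqn:fundamental}). The only real subtlety is to interpret the vector-valued Hankel setup from (\ref{hankelmatrix}) as a scalar linear-algebra problem over $\ff$ in a way that is consistent with Proposition \ref{Prop:Hankel}; once that reduction is made precise, the polynomial-time bound follows by summing the polynomial costs of the finitely many enumerated steps.
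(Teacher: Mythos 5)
Your proposal is correct and follows essentially the same route as the paper, which likewise proves the theorem by tracing through Algorithm 1, invoking Proposition \ref{Prop:Hankel} and formula (\ref{solution}), and noting that $M=O(n^k)$ bounds both the degree $m$ and the number of iterates to be computed. Your version merely fills in details the paper leaves implicit, most usefully the observation that the vector-valued Hankel entries must be flattened into scalar coordinates over $\ff$ before ranks and the system (\ref{Hankeleqn}) can be handled by Gaussian elimination.
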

Proof of the theorem follows immediately from the way Algorithm (\ref{IncompleteAlgo}) is constructed using the Proposition (\ref{Prop:Hankel}) and the formula of solution of the inverse $x$ given in (\ref{solution}).  The polynomial time assertion follows because the bound $M$ is of polynomial size in $n$ which makes the degree $m$ as well as number of terms $F^{(k)}(y)$ needed to be computed to find $x$ bounded by a polynomial order. 

\begin{remark}
The incomplete algorithm can be used in estimating the density of values $y$ in $\ff^n$ which result in a small (polynomial order) LC of the sequence $S(F,y)$. This density is the probabilistic estimate of the number of instances of $y$ for which computation of local inverse is feasible. Another variation of the incomplete algorithm is to progressively increase the degree $m$ of the minimal polynomial starting from a small degree $m_0$ at which the rank condition $\rank H(m_0)=\rank H(m_0+1)$ holds. Then verify whether solution $x$ obtained is correct. Increase $m$ to find the minimal polynomial and verify the solution $x$ until $m>\lfloor M/2\rfloor$. Many other variations of the algorithm are possible to exploit parallel computation of several possible minimal polynomials and possible solutions $x$ to verify. These discussions shall be a subject of a separate article on implementation of the incomplete algorithm. 
\end{remark}

This section completes the presentation of the ideas behind computing local inverse in practically feasible resources in time and memory. Applications of this approach using the incomplete algorithm for cryptanalysis are described in the following sections.

\section{Local Inversion of Embedding}
In the previous section the incomplete algorithm was proposed to solve the local inversion of maps $F:\ff^n\rightarrow\ff^n$. In many situations of cryptanalysis however, the map available is an embedding $F:\ff^n\rightarrow\ff^m$ for $n<m$. In this section we address the problem of local inversion of such embeddings and extend the incomplete algorithm to find the local inverse $x$ in a given equation $y=F(x)$.

\subsection{Simultaneous maps associated with embedding}
Clearly the difficulty in applying the previous theory of inversion of maps $F:\ff^n\rightarrow\ff^m$ when $n<m$ is that the recurring sequence $S(F,y)$ (\ref{Sequence}) or the dynamical system (\ref{eqn:dynsys}) are not defined. Following observation is useful in defining simultaneous maps associated with an embedding and a solution $x$. Let $t=m-n$, define projection maps 
\[
\Pi_i:\ff^n\rightarrow\ff^m
\]
for $i=1,2,\ldots (t+1)$ by 
\[
\Pi_i(x_1,x_2,\ldots,x_m)=(x_{1+i-1},x_{2+i-1},\ldots,x_{(n+i-1)})
\]
Then $\Pi_(1)=(x_1,x_2,\ldots,x_n)$, $\Pi_2=(x_2,x_3,\ldots,x_{(n+1)})$ and so on till $\Pi_{(t+1)}=(x_{(t+1)},x_{(t+2)},\ldots,x_m)$
\begin{lemma}
If an embedding $F:\ff^n\rightarrow\ff^m$ is given with $t=m-n>0$ then equation $y=F(x)$ has a solution $x$, iff the following equations simultaneously have solution $x$
\beq\label{projectioneqns}
y(i)=F_i(x),\mbox{for}\;i=1,2,\ldots (t+1)
\eeq
where $y(i)=\Pi_i(y)$ and $F_i=P_i\circ F$ for $i=1,2,\ldots,(t+1)$ which are maps $F_i:\ff^n\rightarrow\ff^n$, 
\end{lemma}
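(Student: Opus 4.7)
The plan is to establish the biconditional directly from how the sliding-window projections were defined, with no reliance on field structure whatsoever. The statement is really a combinatorial observation — the family $\Pi_1, \ldots, \Pi_{t+1}$ jointly reads off every coordinate of $\ff^m$ — and the algebraic packaging as endomorphisms $F_i: \ff^n \to \ff^n$ is only so that the recurrence-sequence machinery developed in Section 2 can be applied to each $F_i$ separately.

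For the forward direction, I would simply apply $\Pi_i$ to both sides of $y = F(x)$ for each $i = 1, \ldots, t+1$. Using $y(i) = \Pi_i(y)$ and $F_i = \Pi_i \circ F$ by definition, this gives $y(i) = F_i(x)$ immediately, so any $x$ inverting $F$ at $y$ simultaneously solves the system (\ref{projectioneqns}).

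The converse is the substantive half, and I would prove it by a coordinate-covering argument. For every $j \in \{1, \ldots, m\}$ I would exhibit an index $i(j) \in \{1, \ldots, t+1\}$ whose window of length $n$ contains position $j$; the choice $i(j) = \max(1, j - n + 1)$ works, and since $t + 1 = m - n + 1$ one quickly checks $1 \le i(j) \le t + 1$ together with $i(j) \le j \le i(j) + n - 1$, so the $j$-th coordinate of any $v \in \ff^m$ is recovered as the $(j - i(j) + 1)$-th coordinate of $\Pi_{i(j)}(v)$. Assuming $\Pi_i(y) = F_i(x) = \Pi_i(F(x))$ for every $i$, matching the appropriate entries then forces $y_j = [F(x)]_j$ for each coordinate $j$ separately, whence $y = F(x)$.

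The main point requiring care is purely notational bookkeeping. The index range $i = 1, \ldots, t + 1 = m - n + 1$ is tight — it is the largest range for which a length-$n$ sliding window fits inside $\{1, \ldots, m\}$ — and this same range is exactly what makes each $F_i = \Pi_i \circ F$ a self-map of $\ff^n$, as needed in the statement. Beyond verifying this index arithmetic, the lemma presents no real mathematical obstacle.
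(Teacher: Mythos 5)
Your proposal is correct and takes essentially the same route as the paper: the forward direction by applying $\Pi_i$ to both sides of $y=F(x)$, and the converse by a coordinate-covering argument over the sliding windows. The paper's version of the covering step reads coordinates $1,\ldots,n$ off the first projection equation and coordinate $n+i-1$ off the $i$-th for $i=2,\ldots,t+1$, which is exactly your choice $i(j)=\max(1,\,j-n+1)$ written out sequentially, so the two arguments differ only in bookkeeping.
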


\begin{proof}
If $y=F(x)$ is satisfied then $\Pi_i(y)=y(i)=\Pi_i\circ F(x)=F_i(x)$. Hence the necessity is obvious.

Conversely if the system of equations (\ref{projectioneqns}) (called as projection equations) are satisfied by $x$ simultaneously, then for each co-ordinate of $y$ the equation
\[
y_j=f_j(x)
\]
holds for $j=1,\ldots,n$ from the first projection equation (\ref{projectioneqns}) for $i=1$. Then from projection equation for $i=2$, $y_{(n+1)}=f_{(n+1)}(x)$, from projection equation for $i=3$, $y_{(n+2)}=f_{(n+2)}(x)$ and subsequently from further projection equations, $y_{(n+j)}=f_{(n+j)}(x)$ are satisfied until $j=t$. Hence for all co-ordinates of $y$ the equations satisfied are
\[
y_{i}=f_{i}(x)\;\mbox{for}\;i=1,\ldots,m
\]
Hence $x$ satisfies the equation $y=F(x)$.
\end{proof}

The individual projection equations (\ref{projectioneqns}) have maps $F_{i}:\ff^n\rightarrow \ff^n$ hence the incomplete algorithm of the previous section can be employed to find the solution if and when one of them $S(F_i,y(i))$ is a periodic sequences. Hence we can now write the following

\begin{theorem}
If the any one of the sequences $S(F_i,y(i))$ for $i=1,2,\ldots, (t+1)$ is periodic and has LC of polynomial order $O(n^k)$, then a unique solution $x$ of the equation $y=F(x)$ for the given embedding exists iff $x$ satisfies each of the equations (\ref{projectioneqns}). The solution can be computed in polynomial time.
\end{theorem}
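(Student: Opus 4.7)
The plan is to deduce this theorem as a direct composite of the preceding Lemma (which reduces local inversion of the embedding $F$ to simultaneous local inversion of the $(t+1)$ square maps $F_i$) and Theorem~\ref{Solutioninperiodicorbit} together with Algorithm~\ref{IncompleteAlgo} (which handle the case of one square map whose associated recurrence sequence has polynomial LC). So I would first invoke the Lemma to rewrite the condition ``$x$ solves $y=F(x)$'' as ``$x$ simultaneously solves $y(i)=F_i(x)$ for all $i=1,\ldots,t+1$''. This immediately yields the stated ``iff'' half of the theorem, independent of any complexity considerations.

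Second, I would fix an index $i_0$ for which, by hypothesis, the sequence $S(F_{i_0},y(i_0))$ is periodic with linear complexity of order $O(n^k)$. By Theorem~\ref{Solutioninperiodicorbit} there is a unique element $\tilde{x}$ in the periodic orbit through $y(i_0)$ satisfying $F_{i_0}(\tilde{x})=y(i_0)$, and this $\tilde{x}$ is given explicitly by formula (\ref{solution}) in terms of the coefficients of the minimal polynomial of $S(F_{i_0},y(i_0))$. Because the LC is bounded by a polynomial in $n$, the incomplete algorithm (and the theorem preceding this one) computes $\tilde{x}$ in polynomial time: the Hankel system (\ref{Hankeleqn}) has size $O(n^k)$ and only $O(n^k)$ iterates $F_{i_0}^{(j)}(y(i_0))$ are required.

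Third, I would verify the candidate $\tilde{x}$ against the remaining projection equations. Since $F$ is an embedding (in particular injective on its domain), the equation $y=F(x)$ has at most one solution; hence uniqueness in the theorem is automatic. By the Lemma, $\tilde{x}$ is the solution of $y=F(x)$ if and only if $F_j(\tilde{x})=y(j)$ for every $j\neq i_0$. Each such check is a single evaluation of one of the maps $F_j$, which is a polynomial-time operation, so the total verification cost is $O(t)$ map evaluations and therefore polynomial in $n$. Combining the construction of $\tilde{x}$ with this verification gives a polynomial-time procedure that either outputs the unique solution $x$ of $y=F(x)$ or certifies that no such solution exists.

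I do not anticipate a genuine obstacle: every nontrivial step has already been established in the excerpt, and the proof essentially amounts to splicing the Lemma with Theorem~\ref{Solutioninperiodicorbit} and the preceding polynomial-time theorem for square maps. The only point that requires a little care is distinguishing the two notions of uniqueness in play (uniqueness of $\tilde{x}$ inside the periodic orbit of $F_{i_0}$, versus uniqueness of $x$ as a solution of $y=F(x)$); the former comes from part~(1) of Theorem~\ref{strofsolns} applied to $F_{i_0}$, while the latter follows from injectivity of the embedding $F$. Once both are made explicit, the ``iff'' statement and the polynomial-time claim follow immediately.
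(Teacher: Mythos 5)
Your proposal matches the paper's intended argument: the paper gives no explicit proof of this theorem, but the surrounding text and Algorithm 2 make clear that the proof is exactly the splice you describe --- the Lemma supplies the ``iff'' via the projection equations (\ref{projectioneqns}), Theorem \ref{Solutioninperiodicorbit} and Proposition \ref{Prop:Hankel} supply the polynomial-time computation of the candidate $\tilde{x}$ from the minimal polynomial of $S(F_{i_0},y(i_0))$, and the remaining projections are checked by $O(t)$ forward evaluations. Your explicit separation of the two notions of uniqueness is a worthwhile clarification that the paper leaves implicit.

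Two caveats. First, your closing claim that the procedure ``either outputs the unique solution or certifies that no such solution exists'' is an overclaim: by Theorem \ref{strofsolns}, $F_{i_0}(x)=y(i_0)$ may also have solutions lying on chains entering the periodic orbit of $y(i_0)$, so if the periodic-orbit candidate $\tilde{x}$ fails verification against the other projections, a chain solution of the $i_0$-th system could still solve the full embedding; this is precisely why Algorithm \ref{Embeddingsolver} returns ``no conclusion'' rather than a certificate of non-existence. Second, the paper uses ``embedding'' only to mean $n<m$, not injectivity, so uniqueness of the solution of $y=F(x)$ does not come for free from injectivity of $F$; the uniqueness the theorem can actually deliver is that of the solution within the periodic orbit of the chosen projection, which is the first point again in different clothing. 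Neither caveat affects the ``iff'' or the polynomial-time claim as literally stated in the theorem.
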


Despite the theorem asserting a solution $x$ iff all the systems (\ref{projectioneqns}) have this solution the theorem has the same limitation as before that the minimal polynomial of the sequence $S(F_i,y(i))$ can never be known in practice since only a limited polynomial size terms of the sequence can be computed. Hence only an incomplete algorithm can detect possible solutions. We extend the previous incomplete algorithm to the present case of solving the equation $y=F(x)$ when $F$ is an embedding by checking the condition for simultaneously solving the systems (\ref{projectioneqns}).

\subsection{Incomplete algorithm for solving embedding}
We now present an incomplete algorithm to solve the local inversion $y=F(x)$ when $F$ is an embedding. The basic idea is to search for an index $i$ within the system of equations among (\ref{projectioneqns}), $F_{i}(x)=y(i)$ which has a polynomially bounded subsequence of its recurring sequence with a minimal polynomial and verify whether the solution obtained for this $i$-th system satisfies all other projection systems. Even if a specific $i$-th projection equation does not satisfy the solution condition of Algorithm 1, it may satisfy the solution obtained from $j$-th system for $j\neq i$. Hence within the bound specified on computation the following algorithm searches for a solution of the embedding or else declares that there was no conclusion. From this discussion we can define an extension of the concept of LC to the case of embedding as the smallest LC of an $i$-th system which results in a solution which satisfies the complete equation (\ref{eqn:fundamental}) of the embedding map. 

\begin{algorithm}\label{Embeddingsolver}
\caption{Incomplete algorithm to find the unique solution of an embedding}
\begin{algorithmic}[1]
\State\textbf{Input}: $F:\ff^n\rightarrow\ff^m$, $n-m=t$, $y$ in $\ff^n$, $M$ an upper bound of polynomial order $O(n^r)$.
\State\textbf{Output}: One solution of $F(x)=y$ if one exists otherwise returns that there is no conclusion.
\Procedure{Solution\_of\_Embedding}{$y=F(x)$}
\State Compute the projection equations (\ref{projectioneqns}) for $i=1,2,\ldots,(t+1)$.
\State $i=1$
\Repeat 
\State Find a solution $x$ of $y(i)=F_i(x)$ for the given bound $M$ using Algorithm 1 in section (\ref{IncompleteAlgo}).
\If {$i$-th projection equation has no conclusion on solution}
\State $i\leftarrow i+1$
\ElsIf {Verify whether 
\[
y(j)=F_j(x)\mbox{ for }j=1,2,\ldots,(t+1),j\neq i
\]
} 
\State Return: solution $x$ if verified for all $j$.
\State Go to \textbf{End} of procedure.
\Else 
\State $i\leftarrow i+1$
\EndIf
\Until $i=(t+1)$
\State Return: No conclusion on solution within the bound.
\State \textbf{End}
\EndProcedure
\end{algorithmic}
\end{algorithm}

\begin{remark}
For an individual $i$-th projection equation there may not be a periodic orbit $S(F_i,y(i))$ nor LC of polynomial order. In such a case the algorithm increments index $i$ of the projection and examines a new system for the periodic solution. However when one of the projection systems has a periodic sequence of recurrence and the LC is within $\lfloor M/2\rfloor$, the algorithm verifies the solution with all other projection systems before returning the solution.
\end{remark}

\subsection{Solving under-determined systems}
When the map $F$ is an embedding, the system $y=F(x)$ is over-determined with larger number of equations than unknown variables to be solved. Hence it is useful to consider the other extreme when number of equations is less than the number of variables to be solved. Hence consider the map $F:\ff^n\rightarrow\ff^m$ where $n>m$. Let $t=n-m$, then if $x$ in $\ff^n$ satisfies the equation there exist assignments to $a=(x_1,x_2,\ldots,x_t)$ in $\ff^t$ such that $F(a,x_{(t+1)},x_{(t+2)},\ldots,x_n)=y$ has a solution. Hence each such assignment $a$ gives rise to a map $F_a:\ff^m\rightarrow\ff^m$ for which the standard theory of local inversion and Algorithm 1 developed in previous section applies. A solution to inversion is then $(a,x)$ for any solution $x$ for $F_a(x)=y$. If $t$ is small enough then such an approach to computation of inversion is feasible. We shall not treat this problem in further detail in this paper except for a special case over the binary fields explained below.

\subsubsection{Under determined systems over the binary field}
We briefly indicate a Boolean equational approach for reducing the variables in the problem of solving $y=F(x)$. This system also represents a Boolean system of equations when the field where the variables and functions in this equation take values is $\ftwo$. Then following the \emph{orthogonal expansion} of functions in a fixed number of variables arising in a subset of equations as shown in \cite{suleImp} these equations can be decomposed into independent systems of equations with non overlapping variables. The systems of equations without common variables can be considered for local inversion independently in reduced number of variables. By subsequent such reductions the original system is brought to a group of sub-systems of equations of the form 
\[
y_i=F_i(x^i)
\]
where $x^i$ denote the variables involved in the $i$-th sub-system which will be an embedding. A solution of the original system then can be obtained by solving each of these systems independently (or parallely) using the local inversion algorithms. Further details and applications of this approach of local inversion for discovering collisions in hash functions shall be developed in forthcoming articles.

The greatest advantage of over defined systems with $F:\ftwo^n\rightarrow\ftwo^m$, $n<m$ is that the forward operation of recurrence for generating the sequences $S(F_i,y(i))$ in the projected systems does not require symbolic or algebraic modeling of the function. The forward operation can be carried out by direct algorithmic description of $F$. On the other hand the difficulty in this case $n>m$ in using the Boolean approach as compared to brute force search over variables is that the decomposition requires that the functions in equation $y=F(x)$ be represented in Boolean function models (in symbolic form). Such models are usually not readily available when the function $F$ is an algorithm involving number theoretic or finite field arithmetic. This is yet another difficult problem of computation. These issues shall be explored in separate articles.

\section{Solution of the key recovery problem in symmetric encryption}
In this section we begin the first application of the incomplete algorithms Algorithm 1, 2 for local inversion. This is the problem of key recovery of symmetric encryption under Known or Chosen Plaintext/Ciphertext Attacks (KPA,CPA, CCA).

\subsection{Block cipher case}
A block cipher is an algorithm $E(.,.)$ which returns the ciphertext block $C$ when the symmetric key block $K$ and the plaintext block $P$ are input to the algorithm
\[
C=E(K,P)
\]
Similarly the decryption algorithm of the block cipher is given by relation
\[
P=D(K,P)
\]
If $P$ is known in the encryption algorithm a function $F_{P}(K)=E(K,P)$ which depends on $P$ defines the fundamental equation (\ref{eqn:fundamental})
\[
C=F_{P}(X)
\]
whose local inversion gives the key $K$. The theory developed in the previous sections then gives us the

\begin{theorem}
Let $K$ and $C$ be both strings in $\ff^n$. If the recurring sequence 
\[
S(F_{P},C)=\{C,F_{P}(C),F_{P}^{(2)}(C),\ldots\}
\]
is periodic and has LC of polynomial order $O(n^k)$ then the key $K$ can be solved by the incomplete algorithm Algorithm 1, in polynomial time using a suitable bound $M$.  
\end{theorem}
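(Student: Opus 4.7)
The plan is to recognize this theorem as a direct corollary of the local-inversion framework built up in Section 2. With the plaintext $P$ fixed and known, the function $F_P : \ff^n \to \ff^n$, $K \mapsto E(K,P)$, is a genuine endomap on $\ff^n$, and the observed ciphertext $C$ lies in its image (witnessed by the unknown key used to produce it). Thus recovering a key reduces to the fundamental local inversion equation (\ref{eqn:fundamental}) with $y = C$ and $F = F_P$, which puts the problem squarely within the scope of the incomplete algorithm of Section 2.

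Next, I would invoke the structural results. Since by hypothesis $S(F_P, C)$ is periodic, Theorem \ref{strofsolns} (item 1) guarantees a unique preimage of $C$ inside the periodic orbit through $C$, namely $F_P^{(N-1)}(C)$ where $N$ is the period. By Theorem \ref{Solutioninperiodicorbit} this preimage admits the closed-form expression (\ref{solution}) in the coefficients of the minimal polynomial $m(X)$ of $S(F_P, C)$, whose degree is exactly the LC and hence, by hypothesis, is bounded by $c\,n^k$ for some constant $c$.

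Finally I would run Algorithm \ref{IncompleteAlgo} with a polynomial bound $M = O(n^r)$ chosen so that $\lfloor M/2 \rfloor \geq c\,n^k$, i.e.\ any $r > k$ suffices. Proposition \ref{Prop:Hankel} ensures that the Hankel rank condition (\ref{rankcond}) holds at $m = \deg m(X) \leq \lfloor M/2 \rfloor$, so the linear system (\ref{Hankeleqn}) of size $O(n^k)$ determines $\hat{\alpha}$ uniquely, and formula (\ref{solution}) yields the preimage, which Algorithm \ref{IncompleteAlgo} then verifies against $F_P(x) = C$. Each of the $M$ iterates is one call to the cipher and therefore runs in time polynomial in $n$, while forming and solving the Hankel system costs $O(M^3)$ field operations; the whole procedure is polynomial in $n$.

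The main point of care --- rather than a genuine technical obstacle --- is that the returned preimage need not coincide with the originally used key should $F_P$ fail to be injective (different keys may encrypt $P$ to the same $C$), and that the verification step inside the algorithm is essential to discard spurious lower-degree relations which may accidentally fit a truncated subsequence. Both issues are already handled: any preimage that passes the verification $F_P(x) = C$ is a valid decryption key for the given plaintext--ciphertext pair, which is exactly what the theorem claims, and the verification rules out false positives by construction.
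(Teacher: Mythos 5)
Your proposal is correct and follows essentially the same route as the paper, whose entire proof is a one-line appeal to Theorem 3 of Section 2 (the polynomial-time solvability of local inversion when the truncated sequence has a minimal polynomial of degree at most $\lfloor M/2\rfloor$); you have simply unpacked the chain of results (Theorem \ref{strofsolns}, Theorem \ref{Solutioninperiodicorbit}, Proposition \ref{Prop:Hankel}, Algorithm \ref{IncompleteAlgo}) on which that theorem rests. Your added remarks on non-injectivity of $F_P$ and the necessity of the verification step are sensible but not part of the paper's argument.
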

 The proof follows from the Theorem 3 of Section 2. The theorem can also be applied for solving the key $K$ using the local inversion of the function $F_{C}(K)=D(K,C)$ at $P=F_{C}(K)$ and an analogous theorem can be stated for this function.

\subsubsection{Variations of the inversion map}
In certain cipher algorithms (such as the older cipher DES) the block lengths of $P$ and $C$ are $64$ while the key length is $58$. Hence in such a case the local inversion maps $F_{P}$ and $F_{C}$ are embeddings and the algorithm applied for local inversion is Algorithm 2. Hence the above theorem is re-written with extension for the case of local inversion of the embedding map $F$ and follows from Theorem 4 of section 3. For instance in this case of inversion problem for DES, the difference $m-n=6$ between number of variables and equations. Hence the number of projection systems defined in (\ref{projectioneqns}) are $7$.

\subsection{Stream cipher case}
A stream cipher is defined by a dynamical system of the type
\beq\label{StreamCipher}
\begin{array}{lcl}
x(k+1) & = & F(x(k)),x(k)\in X\\
w(k)   & = & f(x(k)),k=0,1,2,\ldots
\end{array}
\eeq
where $F$ is the state update map acting in the state space $X=\ff^n$ and $f:X\rightarrow\ff$ is the output map which outputs a stream $w(k)$ for $k=0,1,2,\ldots$. The initial state $x(0)$ is partitioned as $x(0)=(K,IV)$ where $K$ is the symmetric key and $IV$ is the initializing seed (called Initial Vector). Operation of such a stream cipher is carried out as follows:
\begin{enumerate}
    \item Sender and receiver both share the symmetric key $K$ confidentially.
    \item Sender generates an IV and generates the outpiut stream $w(k)$.
    \item Sender encrypts the plaintext stream $p(k)$ as ciphertext stream
    \[
    c(k)=p(k)+w(k)
    \]
    \item Sender sends $(IV,\{p(k)\})$ to the receiver.
    \item Receiver generates $w(k)$ from the $IV$ using the secret key $K$.
    \item Sender decrypts $p(k)=c(k)-w(k)$.
\end{enumerate}

In a known or chosen plaintext attack an adversary has access to a partial stream of plaintext $p(j),j=k_0,k_0+1,\ldots,k_0+m$. Hence the adversary has access to the partial output stream $w(k),k=k_0,k_0+1,\ldots k_0+m$. The problem of cryptanalysis is now to recover $K$ once we are given $IV$ and partial output stream $w(k)$. We may assume that $(m+1)$ is same as number of components in $K$ or the length $l$ of $K$. Then the map to be locally inverted is 
\[
\Phi_{IV}(K):\ff^l\rightarrow\ff^l
\]
where for a known $IV$ and symmetric key input $K$, $\Phi_{IV}(K)=\hat{w}=(w(k_0),\ldots,w(k_0+m))$, where $\Phi_{IV}()$ represents the computation from a given input $K$ in the initial condition $x(0)$ to the output stream from $w(k)$ from $k_0,\ldots,k_0+m$. We then have the following theorem for cryptanalysis of the stream cipher following Theorem 3 of Section 2,

\begin{theorem}
If the recurring sequence 
\[
S(\Phi_{IV},\hat{w})=\{\hat{w},\Phi_{IV}(\hat{w}),\Phi_{IV}^{(2)}(\hat{w}),\ldots\}
\]
is periodic and has LC of polynomial order $O(l^k)$, then the key $K$ can be solved in polynomial time using the incomplete algorithm Algorithm 1 in polynomial time. 
\end{theorem}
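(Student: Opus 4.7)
The plan is to reduce this theorem directly to Theorem 3 of Section 2 applied to the self-map $F = \Phi_{IV}: \ff^l \rightarrow \ff^l$. First I would identify the setting: under the stated assumption $(m+1) = l$, the map $\Phi_{IV}$ takes $\ff^l$ to itself, so the local inversion framework of Section 2 applies verbatim and no embedding treatment from Section 3 is needed. The known data $\hat{w} = (w(k_0),\ldots,w(k_0+m))$ is the value of $\Phi_{IV}$ at the unknown key $K$, so recovering $K$ is precisely the fundamental equation (\ref{eqn:fundamental}) with $y = \hat{w}$ and unknown $x = K$.

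Next I would check that the hypotheses of the Section 2 machinery are met. Periodicity of $S(\Phi_{IV}, \hat{w})$ places $\hat{w}$ in a closed orbit of the dynamical system generated by $\Phi_{IV}$, so by item 1 of Theorem \ref{strofsolns} there is a unique pre-image of $\hat{w}$ in that same orbit; since $\Phi_{IV}(K) = \hat{w}$, this pre-image must be $K$. By Proposition \ref{Existenceofmp} the sequence has a minimal polynomial whose degree equals its LC, which is assumed to be $O(l^k)$. Theorem \ref{Solutioninperiodicorbit} then gives the explicit formula (\ref{solution}) expressing $K$ as a linear combination of the first $m$ iterates $\Phi_{IV}^{(i)}(\hat{w})$ with coefficients $\al_0,\ldots,\al_{m-1}$ of this minimal polynomial.

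Finally I would invoke Algorithm 1 with computation bound $M = O(l^r)$ for any fixed $r > k$ chosen so that $\lfloor M/2 \rfloor$ exceeds the true LC. At the correct degree the rank condition (\ref{rankcond}) of Proposition \ref{Prop:Hankel} is met, the linear system (\ref{Hankeleqn}) is solved over $\ff$ to recover the coefficient vector, and formula (\ref{solution}) is evaluated to produce $K$; the built-in check $\Phi_{IV}(K) = \hat{w}$ in Algorithm 1 rejects any spurious polynomial that might accidentally fit a truncated Hankel matrix. All of the steps — generating $O(l^r)$ forward iterates of $\Phi_{IV}$ (each a simulation of the stream cipher over roughly $l$ output taps), computing ranks of Hankel matrices of size $O(l^k)$, and solving a linear system of the same size — run in time polynomial in $l$. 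The only subtlety worth flagging, which mirrors the corresponding comment for the block cipher case, is that $M$ must be chosen a priori without exact knowledge of the LC; but since we only need $M$ of polynomial order, fixing $r$ modestly larger than the target exponent $k$ keeps the overall complexity polynomial in $l$.
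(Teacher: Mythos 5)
Your proposal is correct and takes essentially the same route as the paper, which gives no separate argument for this theorem beyond noting that it ``follows Theorem 3 of Section 2'': you have simply made the reduction explicit by identifying $F=\Phi_{IV}:\ff^l\rightarrow\ff^l$, $y=\hat{w}$, $x=K$, and then invoking the minimal-polynomial existence, the solution formula (\ref{solution}), and Algorithm 1 with a polynomial bound $M$. The one small overreach is your assertion that the unique pre-image of $\hat{w}$ in the periodic orbit ``must be $K$'' --- by Theorem \ref{strofsolns} that pre-image is only guaranteed to equal $K$ if $K$ itself lies on the periodic orbit (or the solution is globally unique), a caveat the paper likewise leaves implicit.
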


The above theorem can be modified suitably if a longer than $l$ output stream $w(k)$ is available for inversion of the embedding $\Phi_{IV}:\ff^l\rightarrow\ff^{(m+1)}$. The details are omitted. 

\subsection{Estimates of bounds on complexities for AES}
AES block cipher is designed in different versions with multiple key sizes hence the estimates of bounds for cryptanalysis of AES by local inversion as well as the maps defined for local inversion are dependent on these key sizes. We shall briefly discuss these issues for each of the cases of versions of AES.

\subsubsection{AES128}
Plaintext $P$ block: $128$ bits, Key $K$ block: $128$ bits, Ciphertext $C$ block: $128$ bits. Number of rounds for processing $P$: $10$, Number of rounds of key schedule: $10$.
\[
\begin{array}{lcl}
\mbox{Encryption function} & : & C=E(K,P)\\
\mbox{Decryption function} & : & P=D(K,C)
\end{array}
\]
Map for local inversion in Known Plaintext Attack (KPA) (or Chosen Plaintext Attack (CCA)), $C=F_{P}(K):=E(K,P)$.
\[
F_{P}(K):\ftwo^{128}\rightarrow\ftwo^{128}
\]
hence $n=128$. Complexity bounds for $n^k$ for polynomial time search for local inversion are shown below. With $40$ bit brute force search, the map becomes an embedding with $n=88$.
\newline

\begin{tabular}{||l|l|l||}
\hline
  $k$  & $n^k$ & $n=88$.  \\
\hline
3 & 2 million & 0.7 million\\
2 & 16 thousand & 7744\\
\hline
\end{tabular}

\subsubsection{AES192}
Plaintext block $P$: $128$ bits, Key block $K$: $192$ bits, Ciphertext block $C$: $128$ bits. Number of rounds of plaintext processing: $12$, Key schedule rounds: $8$.
\[
\begin{array}{lcl}
\mbox{Encryption function} & : & C=E(K,P)\\
\mbox{Decryption function} & : & P=E(K,C)
\end{array}
\]
Map for local inversion requires two plaintext blocks $P_1$, $P_2$ encrypted by same key to ciphertext blocks $C_1$, $C_2$.
\[
F_{(P_1,P_2)}(K):\ftwo^{192}\rightarrow\ftwo^{256}
\]
hence the map is an embedding and $n=192$. With $40$ bit brute force search $n=152$
\newline

\begin{tabular}{||l|l|l||}
\hline
  $k$  & $n^k$ & $n=152$.  \\
\hline
3 & 7 million & 3.5 million\\
2 & 36 thousand & 23 thousand\\
\hline
\end{tabular}

\subsubsection{AES256}
Plaintext block $P$: $128$ bits, Key block $K$: $256$ bits, Ciphertext block $C$: $128$ bits. Number of rounds for processing $P$: $14$, Number of rounds in key schedule: $7$.

KPA/CPA: Two plaintext blocks encrypted by one key block.
\[
\begin{array}{lcl}
\mbox{Encryption function} & : & (C_1,C_2)=E(K,(P_1,P_2))\\
\mbox{Decryption function} & : & (P_1,P_2)=D(K,(C_1,C_2))
\end{array}
\]
Map for local inversion
\[
F_{(P_1,P_2)}(K):\ftwo^{256}\rightarrow\ftwo^{256}
\]
hence $n=256$. With $40$ bit brute force search the map is an embedding with $n=216$.
\newline

\begin{tabular}{||l|l|l||}
\hline
  $k$  & $n^k$ & $n=216$.  \\
\hline
3 & 16 million & 10 million\\
2 & 65 thousand & 46 thousand\\
\hline
\end{tabular}
\section{Cryptanalysis of RSA}
In this section we investigate the problem of cryptanalysis of RSA formulated as a local inversion problem. In general, in any public key encryption scheme, with private key $R$, public key $U=H(R)$, Encryption function $c=E(U,m)$ and the decryption function $m=D(R,c)$ for any known ciphertext $c$ and pair $(m,c)$, local inversion of maps $F_{U}$ and $F_{c}$ defined as
\[
\begin{array}{lclcl}
c & = & F_{U}(m) & = & E(U,m)\\
m & = & F_{c}(R) & = & D(R,c)
\end{array}
\] 
solve the unknown plaintext input $m$ and private key $R$. Private key can also be solved by inverting the map $U=H(R)$. 

\subsection{Inversion problems in RSA}
RSA has the private parameters $p,q,d$, $p,q$ unequal odd primes, public parameters $n=pq$ called the \emph{modulus} and an exponent $e$ which defines the private parameter $d$ such that $ed=1\mod\phi(n)$. Let the modulus has bit length $l=\log(n)+1$. We have the following local inversion problems.

\subsubsection{Breaking RSA by ciphertext inversion without factoring $n$}
The map in this case is $F_{U}$ which is obtained from the relation
\[
c=m^{e}\mod c=F_{e}(m):\ff_{2^l}\rightarrow\ff_{2^l}
\]
For any number $a$ in $\zz_n$ denote by $(a)=(a_0,a_1,\ldots,a_l)$ the binary string in the binary expansion of $a$. Similarly let $[(a)]$ denote the number in $[0,n-1]$ corresponding to the binary $l$-tuple $(a)$. Define the map $F_{e}:\ftwo^{l}\rightarrow\ftwo^{l}$
by
\[
(y)=F_{e}(x):=(x^e\mod n)
\]
The dynamical system (\ref{eqn:dynsys}) defined by this map in $\ftwo^l$ is
\beq\label{RSAdyn1}
(y_{(j+1)})=(y_j^e\mod n), j=1,2,\dots
\eeq
where $y_1=(c)$. Hence this generates the sequence 
\[
S(F_{e},c)=\{(c),(c^e\mod n),(c^{e^2}\mod n),\ldots\}
\]
Now note that this map $F_{e}$ has unique inverse $x=m$ in $[0,n-1]$ for any given $c$ in $[0,n-1]$ because of the arithmetic of RSA.
Then from Theorem 3 we get the following theorem on cryptanalysis of RSA for recovering plaintext $m$ given ciphertext $c$ without factoring the modulus

\begin{theorem}
If the sub-sequence 
\[
S(F_{e},c)=\{(c),(c^e\mod n),(c^{e^2}\mod n),\ldots,(c^{e^M}\mod n)\}
\]
of the recurrence of system (\ref{RSAdyn1})
is given upto $M$ terms where $M$ is of polynomial order $O(l^r)$ and has LC of polynomial order $m\leq\lfloor M/2\rfloor$, then the plaintext $m$ can be recovered in polynomial time by from the sequence by Algorithm 1.
\end{theorem}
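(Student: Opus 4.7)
This theorem looks like a straightforward specialization of Theorem 3 to the RSA map $F_e$, so my plan is to verify the hypotheses needed to invoke the general machinery of Section 2 --- periodicity of $S(F_e, c)$, computability of its minimal polynomial from the given $M$-term window, and correctness of the resulting inverse as the RSA plaintext --- and then assemble these into the polynomial-time guarantee. To avoid a notation clash with the plaintext $m$, I will temporarily denote the linear complexity by $d$, so that $d \leq \lfloor M/2 \rfloor = O(l^r)$.

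First I would establish periodicity of $S(F_e, c)$. Because $\gcd(e, \phi(n)) = 1$, the map $x \mapsto x^e \bmod n$ is a bijection of $\zz/n\zz$; identifying $[0, n-1]$ with its bit-representation in $\ftwo^l$, this shows $F_e$ permutes the $n$ strings of this invariant subset. Since $c = F_e(m)$ with $m \in [0, n-1]$, the iterates starting from $c$ stay in this subset, and finiteness plus bijectivity force them into a closed periodic orbit containing $m$. Next I would invoke Proposition \ref{Prop:Hankel}: the rank condition $\rank H(d) = \rank H(d+1) = d$ is detectable from the given $M$-term subsequence because $2d \leq M$, and the minimal-polynomial coefficient vector $\hat{\al}$ is then the unique solution of the Hankel system (\ref{Hankeleqn}) of size $d = O(l^r)$ over $\ftwo$, solvable by Gaussian elimination in time $O(d^3)$.

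Theorem \ref{Solutioninperiodicorbit} then delivers the unique pre-image of $c$ in the periodic orbit through the closed form (\ref{solution}); evaluating it amounts to $O(d)$ XORs on already-computed iterates, with division by $\al_0$ trivial since $\al_0 \neq 0$ forces $\al_0 = 1$ over $\ftwo$. Because $F_e$ restricted to $[0, n-1]$ is a bijection, this pre-image coincides with the RSA plaintext $m$. Polynomial-time bookkeeping assembles routinely: $M$ modular exponentiations to generate the iterates, $O(d^3 l)$ for the Hankel ranking and solving, and $O(dl)$ for the closed-form evaluation --- all polynomial in $l$. The only real subtlety in the plan is confirming periodicity and keeping the ``bit string in $\ftwo^l$'' and ``residue in $\zz/n\zz$'' views of the state coherent; everything else is a direct assembly of Section 2 results, with $\gcd(e, \phi(n)) = 1$ being the sole number-theoretic ingredient required.
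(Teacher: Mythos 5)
Your proposal is correct and follows essentially the same route as the paper's own proof, which simply notes that $S(F_e,c)$ is periodic (by the invertibility of RSA encryption on $[0,n-1]$) and then invokes Theorem 3 / Algorithm 1 to obtain the minimal polynomial and the local inverse in polynomial time. You merely supply details the paper leaves implicit --- the bijectivity argument via $\gcd(e,\phi(n))=1$, the Hankel-system computation from Proposition \ref{Prop:Hankel}, and the explicit runtime bookkeeping --- all of which are consistent with the paper's intended argument.
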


\begin{proof}
The sequence $S(F_{e},c)$ is periodic. If the subsequence is given upto $M$ terms of polynomial order and the minimal polynomial of the full sequence exists then from Theorem 3 it follows that the minimal polynomial and the local inverse can be computed in polynomial time using Algorithm 1. Hence the plaintext $x=m$ is found such that $c=x^e\mod n$. 
\end{proof}

\begin{remark}
Note that this plaintext recovery using local inversion of $F_{e}$ does not utilize factorization of $n$ in any way. On the other hand since the solution $m$ for a given $c$ is unique the sequence $S(F_{e},c)$ is periodic. Hence there is always a minimal polynomial for the sequence and a solution of the inverse of $c$.
\end{remark}

The well known cycling attack on RSA \cite{GysSeb} determines the period of the sequence $S(F_{e},c)$. However this attack is not feasible because the period of this sequence is exponential. While the LC of the sequence might turn out to be of polynomial size. Hence the local inversion attack has a chance of success in certain cases.

In view of the above remarks we state

\begin{corollary}
RSA map $c=m^e\mod n$ can be reversed without factoring the modulus $n$.
\end{corollary}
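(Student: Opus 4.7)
The plan is to deduce the corollary directly from the machinery already assembled in this paper. The key observation is that the RSA encryption map $F_{e}:x\mapsto x^{e}\bmod n$ is a bijection on $\zz_n$ (with inverse $x\mapsto x^{d}\bmod n$ where $ed\equiv 1\pmod{\phi(n)}$), so under the binary encoding $F_{e}$ acts as a permutation on the set of $l$-bit strings representing residues in $[0,n-1]$. By the permutation clause of the earlier Proposition on trajectories of $F$, every orbit of the dynamical system driven by $F_{e}$ is therefore a closed periodic orbit and the GOE is empty. In particular, the sequence $S(F_{e},c)$ is purely periodic for every admissible ciphertext $c$.

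Having established periodicity, I would invoke Proposition \ref{Existenceofmp} to conclude that a minimal polynomial $m(X)$ of $S(F_{e},c)$ exists and satisfies $m(0)\neq 0$. Theorem \ref{Solutioninperiodicorbit} then exhibits an explicit element $x$ of the orbit via formula (\ref{solution}), and this $x$ is the unique solution of $c=F_{e}(x)$ lying in the periodic orbit through $c$. Because $F_{e}$ is a bijection, the preimage of $c$ under the RSA map is unique on all of $\zz_n$, so this $x$ must coincide with the plaintext $m$.

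The final step is to audit the construction for any hidden appeal to the factorization of $n$. Computing the iterates $F_{e}^{(k)}(c)=c^{e^{k}}\bmod n$ uses only modular exponentiation with the public data $(e,n)$; forming the Hankel matrix (\ref{hankelmatrix}) and solving the linear system (\ref{Hankeleqn}) is pure linear algebra over $\ftwo$; and the reconstruction (\ref{solution}) combines only iterates already in hand. Nowhere do $p$, $q$, $\phi(n)$, or $d$ enter. There is no real obstacle here: the corollary is a qualitative reversibility claim, not a complexity claim, so the minimal polynomial need only exist (which is automatic from periodicity) and no low-LC hypothesis must be invoked. The only point requiring care is to keep the statement honest — reversibility is what is claimed, while the route to \emph{feasible} reversal is the content of the preceding theorem and not of this corollary.
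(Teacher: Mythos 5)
Your proposal is correct and follows essentially the same route as the paper: periodicity of $S(F_{e},c)$ (from the bijectivity of the RSA map on residues) gives the existence of a minimal polynomial, which yields the unique inverse in the periodic orbit via formula (\ref{solution}), with no appeal to $p$, $q$, $\phi(n)$ or $d$. Your version is somewhat more careful than the paper's one-line proof, in that you cite the existence results (Proposition \ref{Existenceofmp} and Theorem \ref{Solutioninperiodicorbit}) rather than the polynomial-time theorem, and you correctly separate the qualitative reversibility claim from any low-LC feasibility hypothesis.
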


\begin{proof}
Follows from Theorem 3 since the sequence $S(F_{e},c)$ is periodic there exists the minimal polynomial which computes the local inverse as the last element of the periodic sequence. Hence decryption is achieved without factoring.
\end{proof}

\subsubsection{Breaking RSA using CCA for any encryption using the same private keys without factoring $n$}
This is another local inversion attack possible on RSA to decrypt any ciphertext without factoring the modulus as long same private keys are used for encryption. Consider the function $F_{c}$ defined above used for CCA, using the decryption function of RSA.
\[
m=F_{c}(d):=c^d\mod n
\]
Note that the unknown in this map is the private key $d$ which belongs to $\zz_{\phi(n)}$. Since in general 
\[
\phi(n)\leq (0.6)n
\]
The number of bits for the domain of the map to be inverted which is $\zz_{\phi(n)}$ can be chosen as $l$ same as that of length of $n$. Define the map $F_{c}:\ff_{2^l}\rightarrow\ff_{2^l}$ on the binary representations $(x)$ of numbers in $\zz_n$ and $\zz_{\phi(n)}$ by
\[
(y)=F_{c}((x)):=(c^{[(x)]}\mod n)
\]
The dynamical system generated by this map is
\beq\label{RSAdyn2}
(y_{(j+1)})=(c^{y_j}\mod n),j=1,2,\ldots
\eeq
with $y_1=(m)$. The sequence generated by the map is then
\[
\{y_1=(m),y_2=(c^m\mod n),y_3=(c^{[y_2]}\mod n,\dots\} 
\]
Note that the map $F_c(.)$ does not have GOE. Hence all such sequences are periodic. We can now state,

\begin{theorem}
Consider the CCA on RSA giving the pair $(m,c)$ to the attacker. If the sub-sequence of the sequence of recurrences of the system (\ref{RSAdyn2})
is given upto $M$ terms where $M$ is of polynomial order $O(l^r)$ and has LC of polynomial order $m\leq\lfloor M/2\rfloor$, then the Algorithm 1 computes the decryption of any $c$ obtained using the same public keys in polynomial time.
\end{theorem}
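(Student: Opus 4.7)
The plan is to specialize Theorem 3 of Section 2 to the map $F_c:\ftwo^l\to\ftwo^l$ given by $(y)=F_c((x)):=(c^{[(x)]}\bmod n)$, applied at $y=(m)$ from the CCA pair. Algorithm 1 applied to this instance returns a preimage $x$ with $c^{[(x)]}\equiv m\pmod n$, which is then used as the effective private exponent on any further ciphertext produced under the same public key.

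First I would discharge the periodicity hypothesis of Theorem 3. As the remark immediately preceding the theorem notes, $F_c$ has empty GOE, so by Proposition 1 the orbit through $(m)$ is a closed periodic orbit with no preceding chain segment and $S(F_c,(m))$ is periodic. With this in hand, Theorem 3 together with Proposition \ref{Prop:Hankel} applies: since the hypothesis supplies $M=O(l^r)$ terms of the sequence and an assumed minimal polynomial degree at most $\lfloor M/2\rfloor$, the minimal polynomial coefficients are recovered in polynomial time as the unique solution of the Hankel system (\ref{Hankeleqn}), and the preimage $x$ is then returned in polynomial time by formula (\ref{solution}). The total computation consists of $O(l)$-bit modular exponentiations to generate the iterates together with a linear system of size polynomial in $l$, all polynomial in $l$ overall.

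To conclude that $[(x)]$ decrypts any subsequent ciphertext encrypted under the same public key, I would invoke the uniqueness clause of the preimage within a periodic orbit (Theorem \ref{strofsolns}, part 1). The genuine private key $d$ is a preimage of $(m)$ and lies on the orbit through $(m)$ (again since $F_c$ has no GOE), so the $[(x)]$ returned by Algorithm 1 coincides with $d$ as an exponent at least modulo $\mathrm{ord}_n(c)$. The attacker then decrypts any later ciphertext $c'$ by computing $(c')^{[(x)]}\bmod n$, which reproduces the corresponding plaintext for every $c'$ in the same subgroup structure as $c$.

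The main obstacle I anticipate is this last identification. What local inversion strictly recovers is $d$ modulo $\mathrm{ord}_n(c)$, not modulo $\lambda(n)$, so if $c$ happens to generate only a small subgroup of $\zz_n^*$ the exponent $[(x)]$ could fail to decrypt a ciphertext whose plaintext has larger multiplicative order. For a generic RSA ciphertext $\mathrm{ord}_n(c)=\lambda(n)$, which makes the obstacle vacuous in practice, but a fully rigorous argument would either add the assumption $\mathrm{ord}_n(c)=\lambda(n)$ or iterate the inversion on a handful of independent CCA pairs and assemble the private exponent modulo $\lambda(n)$ by CRT. All other ingredients --- periodicity of $S(F_c,(m))$, polynomial-time solution of the Hankel system, and the use of formula (\ref{solution}) --- carry over mechanically from the machinery of Section 2 and require no additional argument.
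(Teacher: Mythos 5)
Your proposal follows essentially the same route as the paper's own proof: periodicity of $S(F_c,(m))$ from the empty GOE, recovery of the minimal polynomial and the preimage $x$ via Theorem 3 and the Hankel system (\ref{Hankeleqn}), and then use of $[(x)]$ as a universal decryption exponent for all ciphertexts under the same keys. The one place you diverge is the final identification, and your caution there is warranted: the paper deduces from $c \equiv c^{e[y_{(N-1)}]} \pmod n$ that $e[y_{(N-1)}]\equiv 1 \pmod{\phi(n)}$, but that congruence only follows modulo $\mathrm{ord}_n(c)$, exactly as you observe. The paper's proof thus silently assumes $\mathrm{ord}_n(c)=\lambda(n)$ (or at least that $\mathrm{ord}_n(\tilde m)$ divides $\mathrm{ord}_n(c)$ for every later plaintext $\tilde m$), whereas you make that hypothesis explicit and offer the repair of combining several independent CCA pairs by CRT. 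Nothing in your argument would fail; the caveat you flag is a genuine gap in the paper's own proof of the ``decrypts any $c$'' clause rather than a defect of your proposal, and your version is the more rigorous of the two.
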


\begin{proof}
Conditions of the theorem grant that that the subsequence $S(F_{c},m)$ given upto $M$ terms has a minimal polynomial of polynomial order. Hence from theorem 3 we get the unique inverse of $m$ in periodic the sequence of recurrence of the system (\ref{RSAdyn2}). Let the period of the sequence be $N$. Then $[(y_{N})]=m$ and
\[
m=c^{[y_{(N-1)}]}\mod n
\]
From the RSA decryption relation it follows that
\[
c=m^e\mod n=c^{e[y_{(N-1)}]}\mod n
\]
Which implies $e[y_{(N-1)}]=1\mod\phi(n)$ and hence for any other ciphertext $\tilde{c}$ created from the same public keys $e,n$. 
\[
\tilde{m}=c^{[y_{(N-1)}]}\mod n
\]
decrypts $\tilde{c}$.
\end{proof}

\begin{remark}
Although the actual inverse $y_{(N-1)}$ need not equal to $d$, the decryption works as long as same private keys are used in the encryption and $d$ for decryption. Hence once the local inversion is successful for one pair $(m,c)$ any other ciphertext using the same private keys is decrypted.
The theorem shows that the local inversion does not involve factoring the modulus.
\end{remark}

In view of the above remarks we state

\begin{corollary}
RSA can be decrypted in a CCA given $(m,c)$ using local inversion of the map $F_{c}$ without factoring the modulus $n$ and the resultant local inverse can decrypt any ciphertext obtained by the same public keys.
\end{corollary}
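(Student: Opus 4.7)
The plan is to read the corollary as a direct consequence of Theorem 5 above: once that theorem has produced the local inverse $y_{N-1}$ in polynomial time, the two italicized claims — that no factoring of $n$ is involved and that the same inverse decrypts every ciphertext produced under the same public keys — are essentially bookkeeping on top of the argument already written.

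First I would observe that the map $F_c$ has empty Garden of Eden because $(y)=(c^{[(x)]}\mod n)$ is defined for every $(x)\in\ftwo^l$; consequently every sequence $S(F_c,m)$ is purely periodic and Theorem 3 (through Algorithm 1) produces, whenever the hypotheses on $M$ and on the linear complexity hold, a vector $y_{N-1}\in\ftwo^l$ with $m=c^{[y_{N-1}]}\mod n$ via formula (\ref{solution}). I would then emphasize that every step of Algorithm 1 only evaluates the iterates $F_c^{(k)}(m)$, each computed by one modular exponentiation in $\zz_n$; $p$, $q$ and $\phi(n)$ never appear as inputs or intermediate data, which settles the ``no factoring'' part of the claim.

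For the universality part I would substitute $c=m^e\mod n$ into the recovered identity $m=c^{[y_{N-1}]}\mod n$ to obtain $m^{e[y_{N-1}]}\equiv m\pmod{n}$, and then reuse the step inside the proof of Theorem 5 to conclude $e\cdot [y_{N-1}]\equiv 1\pmod{\phi(n)}$. Hence for any fresh ciphertext $\tilde c=\tilde m^e\mod n$ under the same public keys, $\tilde c^{[y_{N-1}]}\mod n=\tilde m^{e[y_{N-1}]}\mod n=\tilde m$, which is exactly the universal decryption asserted by the corollary. The part I expect to be the main obstacle is not the algebra but the careful interpretation of the statement: the corollary is really a conditional existence claim inheriting the small-LC hypothesis of Theorem 5, and the congruence $e\cdot [y_{N-1}]\equiv 1\pmod{\phi(n)}$ has to be justified without ever computing $\phi(n)$ explicitly, so I would close by pointing back to the remark preceding the corollary to make the conditional, ``public-operations-only'' character of the attack fully explicit.
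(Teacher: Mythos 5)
Your proposal is correct and follows essentially the same route as the paper: the empty Garden of Eden of $F_c$ gives pure periodicity of $S(F_c,m)$, hence a minimal polynomial and a local inverse $x$ with $m=c^{[x]}\bmod n$, and the universal-decryption claim is obtained exactly as in the proof of the preceding theorem via $e\cdot[x]\equiv 1\pmod{\phi(n)}$. Your write-up is in fact more complete than the paper's two-line proof (which only asserts $m=c^x\bmod n$ and leaves the universality to the earlier theorem and remark), and you correctly flag that the congruence modulo $\phi(n)$ is inherited from, rather than re-derived beyond, that earlier argument.
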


\begin{proof}
In any CCA, the sequence $S(F_{c},m)$ is periodic. Hence the sequence has a minimal polynomial which allows local inversion of the map $F_{c}$ to find $x$ such that $m=F_{c}(x)$. The inverse then satisfies $m=c^x\mod n$. 
\end{proof}

\subsubsection{Estimates of LC for feasible solution}
As per above theorems on polynomial time solvability of the local inversion problems, the estimates of sizes of linear systems to be solved or the LC are calculated for standard sizes of lengths of RSA modulus. Following table shows the estimates of largest sizes of linear systems required to be solved in the Algorithm 1.
\newline

\begin{tabular}{||l|l|l|l||}
\hline
$l$-length of $n$ & $k$ & $M$ & $m$ or the LC\\
\hline
1024 & 3 & 1 Billion & 537 Million\\
1024 & 2 & 1 Million & 524,288 \\
2048 & 3 & 8.5 Trillion & 4.3 Trillion\\
2048 & 2 & 4.2 Million & 2.1 Million\\
\hline
\end{tabular}

\subsubsection{Factoring as local inversion}
The third way to attack RSA by local inversion is to invert the map which generates the public key from the private key. This is the factorization problem of computing factors of $n$. Analysis of LC of this map for solving factorization is yet another interesting application which shall be investigated in a separate article. Most previously known approaches to factorization are based on Number Theory. Local inversion approach shall provide a new approach to factoring.
\section{Solution of Elliptic Curve Discrete Log Problem (ECDLP) as Local inversion}
This section considers another important problem of cryptanalysis that of solving the ECDLP. In this problem an elliptic curve $E(\ff_q)$ is given over the field $\ff_q$ of char not equal to $2,3$. The Weierstass normal equation of $E$ is
\[
y^2=x^3+Ax+B
\]
where $A,B$ are given in $\ff_q$. The points on $E$ are chosen with co-ordinates $(x,y)$ in $\fq$. Another class of curves used in practice are Koblitz curves $E(\ff_{2^n})$.

In the discrete log problem there are given points $P$ and $Q=[m]P$ in $E$ where $m$ is the integral multiplier. It is required to solve for the multiplier $m$. Define the map
\[
F_{P}:m\mapsto[m]P
\]
then the local inversion of $Q=F_{P}(m)$ solves the ECDLP. However the map as described needs to be expressed in the standard form of the local inversion problem described earlier so that the condition for feasible solution of inversion of Theorem 3 can be utilized.

\subsection{Formulation as local inversion}
The multiplier $m$ is less than the order of the cyclic group $n=<P>$ in $E$. Sometimes the group $E$ itself has prime order $n=\sharp E$ hence the order of $<P>$ is $n$. Hence to fix the number of bits in $m$ we consider estimates of the order of $E$. The well known bound on the order of $E$ is
\[
\sharp E(\fq)\leq q+1+2\sqrt{q}
\]
Assuming $q>4$ we have $\sharp E\leq 2q$. On the other hand the point $Q$ in $E$ has two co-ordinates in $\fq$. Thus the bit length of $\sharp E\leq 1+\log (2q)=2+\log q=l$ while the bit length of two co-ordinates of a point taken together is $1+\log (2q)=l$. Consider the map $F_{P}$ defining the scalar multiplication of $P$ in $E$ 
\[
\begin{array}{lcl}
F_{P} & : & \zz_n\rightarrow E\\
 & & m\mapsto [m]P
\end{array} 
\]
In the binary co-ordinate expansion on both sides this mapping is  
\beq\label{mapFPm}
F_{P}((m))=((Q_x),(Q_y))
\eeq
where $(m)$ denotes the co-efficients in the binary expansion of $m$ and $(Q_x)$, $(Q_y)$ are co-efficients in the binary expansions of co-ordinates of $Q=[m]P$. We shall denote the binary expansion of the co-ordinate pair of $Q$ as $(Q)$. 

\subsubsection{Formulation of $F_{P}$ as a map over the binary field}
In order to utilize the previous theory on local inversion of maps we need to formulate the map (\ref{mapFPm}) as a map in the cartesian spaces of $\ftwo$ and understand whether it is an embedding. 

Let $r=1+\log n$ where $n$ is the order of $<P>$. Then $F_{P}$ in (\ref{mapFPm}) represents a map $F_{P}:\ftwo^r\rightarrow\ftwo^l$ where $r<l$. Thus $F_{P}$ is an embedding. Hence it is required to apply Algorithm 2 to solve the embedding equation for the local inversion of $F_{P}(m)=Q$ from the binary representation in (\ref{mapFPm}). The application of Algorithm 2 requires that $n$ the order of $P$ is known. Let $t=l-r$, then the projection equations (\ref{projectioneqns}) give $(t+1)$ standard equations 
\[
\Pi_i\circ F_{P}(x)=\Pi_i((Q)),i=1,2,\ldots (t+1)
\]
denote by $F_i=\Pi_i\circ F_{P}$ and $y(i)=\Pi_i((Q))$ where $\Pi_i$ are projection on $r$ components of $y$ as defined in (\ref{projectioneqns}). Following Theorem 3 now we have

\begin{theorem}
If for any of the indices $i$, $1\leq i\leq (t+1)$ the projection equation $F_{i}(x)=y(i)$ has a periodic recurrence sequence $S(F_{i},y(i))$ with a LC $m\leq\lfloor M/2\rfloor$ where $M$ is of polynomial order $O(r^k)$ and the local inverse $x$ satisfies all other projection equations then the ECDLP is solved in polynomial time by Algorithm 1.
\end{theorem}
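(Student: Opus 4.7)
The plan is to derive the theorem as a direct specialization of Theorem 4 to the ECDLP embedding $F_P:\ftwo^r\rightarrow\ftwo^l$, exploiting the fact that the codomain--domain gap $t=l-r$ is small. With $l=2+\log q$ and $r=1+\log n$, the Hasse bound $n\leq q+1+2\sqrt{q}$ makes $t$ essentially independent of $r$ (at most a small constant plus lower order terms), so scanning all $(t+1)$ projection equations contributes only a $\mathrm{poly}(r)$ multiplicative overhead. It therefore suffices to show that for the distinguished index $i$ given by the hypothesis, the candidate inverse $x$ can be both computed and verified in polynomial time in $r$.

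For that $i$, the sequence $S(F_i,y(i))$ is periodic with $\mathrm{LC}\leq\lfloor M/2\rfloor$ and $M=O(r^k)$. First I would generate the initial $M$ terms $y(i), F_i(y(i)),\ldots,F_i^{(M-1)}(y(i))$; each new term costs one evaluation of $F_i=\Pi_i\circ F_P$, i.e.\ one scalar multiplication $[\,\cdot\,]P$ on $E(\ff_q)$ followed by the fixed coordinate projection $\Pi_i$, and these are polynomial in $r$ via double-and-add. Then run the inner loop of Algorithm 1: form the Hankel matrices $H(m)$ and $H(m+1)$ of (\ref{hankelmatrix}), check the rank condition (\ref{rankcond}), and when it is met solve the linear system (\ref{Hankeleqn}) by Gaussian elimination to obtain the minimal polynomial coefficient vector $\hat{\alpha}$. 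All of these steps are linear algebra over $\ftwo$ of dimension $O(r^k)$ and hence polynomial. Formula (\ref{solution}) then returns the candidate $x$ in polynomial time.

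By hypothesis this $x$ satisfies $y(j)=F_j(x)$ for every $j\in\{1,\ldots,t+1\}\setminus\{i\}$, and so by Lemma 1 it is the unique solution of the embedding equation $Q=F_P(x)$, which is the desired discrete logarithm $m$. The verification step is at most $t$ further evaluations of $F_j$, again polynomial in $r$. This is precisely the run of Algorithm 2 with Algorithm 1 as its inner routine: the outer loop eventually hits the good index $i$, the inner call produces $x$, and the verification succeeds thanks to the hypothesis. The main subtle point I expect to have to justify is that every iterate $F_i^{(j)}(y(i))$ can indeed be generated from the public data $(E,q,P,Q)$ alone without any prior knowledge of $m$, which is immediate because $F_i$ is defined purely through $F_P$ and the fixed projection $\Pi_i$, neither of which involves the unknown exponent. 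Summing the costs --- $O(M)$ scalar multiplications on $E$ plus one $O(m^3)$ elimination on $H(m)$ --- yields a total running time polynomial in $r$, completing the argument.
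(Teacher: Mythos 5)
Your proposal follows essentially the same route as the paper's proof: reduce the embedding $F_P:\ftwo^r\rightarrow\ftwo^l$ to the $(t+1)$ projection maps $F_i:\ftwo^r\rightarrow\ftwo^r$, apply Algorithm 1 (via the Hankel/minimal-polynomial machinery of Theorem 3) to the index $i$ with periodic recurrence and small LC to obtain $x$ in polynomial time, and then invoke the hypothesis that $x$ satisfies the remaining projection equations together with the Lemma to conclude $x$ solves the full embedding equation and hence the ECDLP. Your added accounting of the per-term cost (one scalar multiplication via double-and-add), the smallness of $t$, and the observation that the iterates are computable from public data alone are correct elaborations rather than a different argument.
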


\begin{proof}
For each of the projections $\Pi_i$ the equations $F_i(x)=y(i)$ have the map $F_{i}:\ftwo^r\rightarrow\ftwo^r$. Hence if the recurrent sequence defined by the system (\ref{eqn:dynsys}) for $F_i$
\[
y(i)_{(j+1)}=F_i(y(i)_j),j=0,1,2,\ldots
\]
where $y(i)_0=\Pi_i((Q))$, is periodic and has a minimal polynomial of deg $m$ then the local inverse $x$ satisfying $F_i(x)=y(i)$ can be computed in polynomial time. Since this is a unique solution in the periodic orbit of the recurrence sequence $S(F_i,y(i))$, if this solution also satisfies all other projection equations, then $x$ satisfies the equation (\ref{mapFPm}) of the embedding. Hence the ECDLP is the local inverse $x$ which is solvable in polynomial time.  
\end{proof}

The sizes of fields $\fq$ for practically used elliptic curves are close to about $256$ bits such as for instance in the current bitcoin curve $\mbox{secp}256k1$. For such sizes of $q=2^{256}$ and elliptic curves defined on them we have $l=258$. Following table gives sizes of $k$, $M$ and $m$ for the elliptic curve $\mbox{secp}256k1$.
\newline

\begin{tabular}{||l|l|l|l||}
\hline
$n<$ & $k$ & $M<$ & $m<$\\
\hline
 258 & 3 & 17 Million & 8.5 Million\\ 258 & 2 & 66564 & 33282\\
 \hline
\end{tabular}
\\
The actual numbers of bits for the order $n$ and bounds $M$ and linear complexity $m$ shall depend on actual value of $n$. Above table shows bounds for the curve $\mbox{secp}256k1$.

\subsection{Discrete logarithm over prime fields as local inversion}
The discrete logarithm computation over the multiplicative group of finite fields has also been an important problem in cryptography called in short as Discrete Log Problem (DLP). In a prime field $\fp$ the DLP is concerned with computation of the index $x$ such that
\[
b=a^x\mod p
\]
where $a,b$ are given in $[1,p-1]$ and $x$ belongs to $[0,p-2]$. $x$ is called the \emph{Discrete Log} of $b$ with respect to base $a$. Since 
\[
a^x\mod p=a^{(x\mod (p-1)}\mod p
\]
by Fermat's theorem. Consider the map
\[
F_a:\fp\rightarrow\fp
\]
such that $F_a(0)=F_a(p-1)=1$. For all other $x$ the map is as defined above. Then the GOE of $F_a()$ is the only point $0$ in $\fp$ since there does not exist an $x$ in $[0,p-2]$ such that $F_a(x)=0$. Hence only $F_a(x)=1$ has two solutions $x=0$ and $x=p-1$. For $b\neq 1$ the local inverse is unique in a periodic orbit. The sequence $S(F_a,b)$ is given by 
\beq\label{Fpseq}
S(F_a,b)+\{b,a^b\mod p,a^{(a^b\mod p)}\mod p,\ldots\}
\eeq
where the recurrence formula is given by
\beq\label{Fprecurrence}
\begin{array}{lcl}
  F_a(b) & = & a^b\mod p  \\
  F_a^{(2)}(b) & = & a^{a^b\mod p}\mod p\\
  F_a^{(k+1)}(b) & = &
  a^{F_a^{(k)}(b)}\mod p
\end{array}
\eeq
Thus for $b\neq 1$ the sequence $S(F_a,b)$ is periodic. Hence following Theorem 2 we have

\begin{corollary}
For $b\neq 1$ every sequence $S(F_a,b)$ (\ref{Fpseq}) in a prime field has a minimal polynomial (\ref{minpoly})
\[
m(X)=X^{(m)}-\sum_{i=0}^{(m-1)}\al_iX^i
\]
over $\fp[X]$ with $\al_0\neq 0$ and the DL of $b$ with respect to base $a$ is the local inverse (\ref{solution})
\[
x=(1/\al_0)F_a^{(m-1)}(b)-\sum_{i=1}^{(m-1)}\al_iF_a^{(i-1)}(b)
\]
\end{corollary}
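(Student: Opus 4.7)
The plan is to show that the corollary is a direct specialization of Theorem \ref{Solutioninperiodicorbit} and Proposition \ref{Existenceofmp} to the map $F_a$ defined in the text. Two things need to be verified: (i) $S(F_a,b)$ is periodic for $b\neq 1$, so that Proposition \ref{Existenceofmp} produces a minimal polynomial of the stated form with $\al_0\neq 0$; and (ii) the unique solution of $F_a(x)=b$ in the periodic orbit provided by Theorem \ref{Solutioninperiodicorbit} is precisely the discrete logarithm of $b$ with base $a$.

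First I would establish the periodicity of $S(F_a,b)$. The map $F_a$ takes values only in $[1,p-1]$, so its GOE can contain at most the single point $0$. Any non-periodic point would have to lie on the chain emanating from $0$; but since $F_a(0)=1=F_a(p-1)$, the point $1$ has the additional preimage $p-1$, and under the standing primitive-root assumption on $a$ the element $p-1$ itself lies in the image of $F_a$. This places $1$ on a cycle, forces the chain from $0$ to have length exactly one, and makes every $b\neq 0$ periodic. Hence for $b\neq 1$ (in fact for any $b\neq 0$) the sequence $S(F_a,b)$ is periodic, and Proposition \ref{Existenceofmp} yields the minimal polynomial $m(X)=X^m-\sum_{i=0}^{m-1}\al_iX^i$ in $\fp[X]$ with $\al_0\neq 0$, establishing the first assertion of the corollary.

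For the second assertion I would invoke Theorem \ref{Solutioninperiodicorbit} with $F=F_a$ and $y=b$. It supplies the unique element $x$ of the periodic orbit of $S(F_a,b)$ satisfying $F_a(x)=b$, given by precisely the formula displayed in the corollary (cf.\ equation (\ref{solution})). Because $b\neq 1$, the equation $F_a(x)=b$ has its solution in $[1,p-2]$, on which $F_a$ coincides with $u\mapsto a^u\bmod p$; therefore $a^x\equiv b\pmod p$, which is the defining property of the discrete log of $b$ to base $a$. The main obstacle is the periodicity verification: one has to argue, from the structure of $F_a$, that no $b\neq 0,1$ sits in a transient chain segment. Once this structural point is granted, the remaining assertions are a mechanical specialization of the general framework already proved.
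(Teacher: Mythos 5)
Your proposal is correct and follows essentially the same route as the paper: the text preceding the corollary establishes periodicity of $S(F_a,b)$ for $b\neq 1$ by observing that the GOE of $F_a$ is exactly $\{0\}$ and that only $b=1$ has two preimages, and the paper's one-line proof then simply invokes the recurrence (\ref{Fprecurrence}) and Theorem \ref{Solutioninperiodicorbit}. You merely spell out the periodicity argument and the identification of the local inverse with the discrete log in more detail (including the implicit primitive-root assumption on $a$), which is a faithful elaboration rather than a different approach.
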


Proof follows from the definition of the recurrence map defined in (\ref{Fprecurrence}) and Theorem 2. Let $l=\mbox{length}\;p$. Then from Theorem 3 we have

\begin{corollary}
If the subsequence of $S(F_a,b)$ given upto polynomial number of terms $M=O(l^k)$ has a minimal polynomial of degree $m\leq\lfloor M/2\rfloor$ over $\fp[X]$ then the DL of $b$ with respect to base $a$ can be computed in polynomial time. 
\end{corollary}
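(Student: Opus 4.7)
The plan is to deduce this directly from Theorem 3 applied to the map $F_a : \fp \to \fp$. The preceding corollary has already done most of the setup work: for $b \neq 1$, the sequence $S(F_a, b)$ is periodic, possesses a minimal polynomial $m(X) \in \fp[X]$ of degree $m$ with $\al_0 \neq 0$, and its unique local inverse in the periodic orbit is given by the closed form (\ref{solution}); this local inverse is precisely the discrete logarithm $x$ of $b$ with respect to base $a$. The present statement only strengthens this by asking for a polynomial-time guarantee under the hypothesis that $M = O(l^k)$ iterates already suffice to expose the minimal polynomial.

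First I would observe that the structural results of Section 2, namely Proposition \ref{Existenceofmp}, Theorem \ref{Solutioninperiodicorbit}, and Proposition \ref{Prop:Hankel}, are field-agnostic; the excerpt explicitly notes that each carries over from $\ftwo$ to any finite field with only cosmetic changes. In particular the Hankel-rank criterion (\ref{rankcond}) and the linear system (\ref{Hankeleqn}) for the coefficient vector $\hat\alpha$ apply verbatim with entries in $\fp$, so Algorithm \ref{IncompleteAlgo} can be instantiated for $F_a$ without modification.

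I would then trace through the algorithm on this instance and check that every step is polynomial in $l$. Generating the $M$ iterates $F_a^{(j)}(b)$ requires $M$ modular exponentiations modulo $p$, each of bit cost $O(l^3)$ by repeated squaring; assembling and row-reducing the $m \times m$ Hankel matrix $H(m)$ over $\fp$ to verify the rank condition and solve for $\hat\alpha$ costs $O(m^3)$ field operations with $m \leq \lfloor M/2 \rfloor = O(l^k)$; evaluating the closed form (\ref{solution}) uses $m$ further field operations plus a single inversion of $\al_0$ via the extended Euclidean algorithm; and a final modular exponentiation $a^x \bmod p$ verifies $a^x \equiv b \pmod p$, guarding against a spurious minimal polynomial produced by the limited subsequence.

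The single substantive step, and the only place where I expect any subtlety, is the identification of the local inverse produced by Theorem \ref{Solutioninperiodicorbit} with the discrete log itself: one must observe that the equation $F_a(x) = b$ satisfied by the output of formula (\ref{solution}) unwinds to exactly $a^x \equiv b \pmod p$, so no further decoding is required. Once this identification is in place, the polynomial-time conclusion reduces to routine bookkeeping of $\fp$-arithmetic and the corollary follows.
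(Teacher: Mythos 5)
Your proposal follows exactly the paper's route: the paper derives this corollary directly from Theorem 3 (the incomplete-algorithm theorem) applied to the map $F_a:\fp\rightarrow\fp$, with the preceding corollary supplying the identification of the local inverse with the discrete log. Your additional bookkeeping of the per-step costs (modular exponentiation for the iterates, Hankel-system solution over $\fp$, evaluation of the closed-form inverse) is a correct and harmless elaboration of what the paper leaves implicit.
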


\subsection{Discrete logarithm over binary extension fields by local inversion}
The case of defining the map for computing DL as local inverse in extension fields is now considered. Consider the field $\ff=\ff_{2^n}$ and let $a$ be a primitive element of $\ff^*$. Then the exponent equation over $\ff$ is
\[
b=a^x
\]
where $x\in [0,2^n-1]$ and $a,b$ are in $\ff^*$. In a polynomial basis for $\ff$, $a,b$ have $n$-bit representation, while the index $x$ also requires $n$-bits. Hence exponentiation can be formulated as the map
\beq\label{Extfieldexponent}
\begin{array}{lcl}
F_{a} & : & \ftwo^n\rightarrow\ftwo^n\\
 & & (x)\mapsto (a^{[(x)]}) 
\end{array}
\eeq
where $(x)=(x_0,\ldots,x_{(n-1)})$ is a binary representation of the index $x$ in $[0,2^n-1]$, $[(x)]$ is the reversal of a binary $n$-bit string $(x)$ as a number in $[0,2^n-1]$, $(a)$ denotes the $n$-bit representation of $a$ in $\ff$ in a fixed polynomial basis. Hence the recurrence can now be defined starting from $b$ in $\ff$ by relations
\beq\label{Recinextfield}
\begin{array}{lcl}
F_a(b) & = & (a^[(b)])\\
F_a^{(k+1)}(b) & = & (a^{[(F_a^{k}(b))]})
\end{array}
\eeq
The map $F_a$ has GOE the only point $(0)$ (the zero in $\ff$). Every point $b\neq 1$ has a periodic orbit. Hence we can now have an identical corollary in the extension field case for computation of the DL.

\begin{corollary}
If the subsequence of $S(F_a,b)$ given upto polynomial number of terms $M=O(n^k)$ has a minimal polynomial of degree $m\leq\lfloor M/2\rfloor$ over $\ftwo[X]$ then the DL of $b$ with respect to base $a$ can be computed in $\ff_{2^n}$ in polynomial time. 
\end{corollary}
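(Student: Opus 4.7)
The plan is to reduce this corollary to Theorem 3 of Section 2 by verifying that the exponentiation map $F_a:\ftwo^n\rightarrow\ftwo^n$ of (\ref{Extfieldexponent}) together with its recurrence (\ref{Recinextfield}) fits the hypotheses of that polynomial-time inversion theorem exactly. Since $F_a$ is a self-map on $\ftwo^n$ (domain and codomain have the same bit-length $n$), this is a standard local inversion problem, so Algorithm 1 applies directly without invoking the embedding machinery of Section 3.

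First I would dispatch the periodicity preliminary. As the paragraph preceding the corollary already notes, the GOE of $F_a$ is the singleton $\{(0)\}$ (since $a^{[(x)]}\neq 0$ for any $x$), so every $b\neq 1$ lies on a periodic orbit, and $b=1$ is the degenerate case excluded by a well-posed DL instance. Hence $S(F_a,b)$ is a periodic sequence in $\ftwo^n$, and by Proposition \ref{Existenceofmp} it admits a genuine minimal polynomial over $\ftwo[X]$ whose $\al_0$ is nonzero.

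Next I would invoke the given hypothesis: the subsequence of $S(F_a,b)$ of length $M=O(n^k)$ has a minimal polynomial of degree $m\leq\lfloor M/2\rfloor$. Proposition \ref{Prop:Hankel} then guarantees that the coefficient vector $\hat{\alpha}$ can be recovered by solving the Hankel linear system (\ref{Hankeleqn}) of size $m$ over $\ftwo$, and Theorem \ref{Solutioninperiodicorbit} produces the unique local inverse $x$ in the periodic orbit of $b$ via the explicit formula (\ref{solution}). Unwinding the definition (\ref{Extfieldexponent}), the equation $F_a(x)=b$ in $\ftwo^n$ is precisely $a^{[(x)]}=b$ in $\ff_{2^n}$, so $[(x)]\in[0,2^n-1]$ is the discrete logarithm of $b$ with base $a$, and Algorithm 1 returns exactly this $x$ after its post-computation check $F_a(x)=b$.

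The polynomial-time bound is a routine accounting and I do not expect a serious obstacle. Generating the $M$ iterates of $F_a$ at $b$ requires $M$ exponentiations in $\ff_{2^n}$, each performed by repeated squaring in a fixed polynomial basis in time polynomial in $n$; forming and solving the Hankel system over $\ftwo$ costs $O(m^3)$ bit operations; and evaluating (\ref{solution}) uses $O(m)$ further iterates and linear combinations. With $m\leq\lfloor M/2\rfloor=O(n^k)$, the total cost is polynomial in $n$. The one step worth flagging, and the only point that could superficially look like an obstacle, is that the minimal polynomial of a finite truncated subsequence need not a priori annihilate the entire periodic sequence; but this is precisely the discrepancy that Algorithm 1's verification line $F(x)=y$ detects, so a successful return of Algorithm 1 yields a genuine inverse and hence a genuine discrete logarithm.
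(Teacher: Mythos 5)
Your proposal is correct and follows essentially the same route as the paper, which simply cites the exponent map (\ref{Extfieldexponent}), the recurrence (\ref{Recinextfield}), Theorem 2 and Theorem 3; you have merely filled in the details (the GOE/periodicity check, the Hankel system of Proposition \ref{Prop:Hankel}, the inversion formula (\ref{solution}), and the runtime accounting) that the paper leaves implicit. Your observation that Algorithm 1's verification step $F(x)=y$ is what guards against a truncated subsequence yielding a spurious minimal polynomial is exactly the safeguard the paper builds into its definition of the minimal polynomial of a subsequence.
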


Proof follows from definitions of the exponent map (\ref{Extfieldexponent}), the recurrence (\ref{Recinextfield}), Theorem 2 and Theorem 3. 

The case of DL computation over extension fields of other characteristics can be discussed on similar lines after defining the exponent map and the recurrence relation. These developments are omitted.
\section*{Appendix}
This section presents proofs of previous results stated in \cite{suleLI1} which are extended to work for any finite field $\ff$.

\subsubsection*{Proof of Theorem (\ref{strofsolns})}.
\begin{proof}
From the definition of the periodic orbits and chains it follows that the space $\ff^n$ is partitioned by the action of $F$ in periodic orbits and segments of chains from points in GOE to a point which is mapped by $F$ to a periodic orbit. Hence given any point $y$ it is either on a unique chain segment or on a periodic orbit. If there is no solution to $F(x)=y$ then $y$ belongs to GOE conversely if $y$ is in GOE then there is no $x$ such that $F(x)=y$. If $y$ is in a periodic orbit $P$, there is unique predecessor $x$ in $P$ such that $F(x)=y$. Any other solution $x$ which is outside $P$ cannot be in any other periodic orbit since the two orbits cannot intersect at $y$. Hence all other solutions $x$ are on chains merging with $P$ at $y$ under iterations of $F$. Hence every solution outside $P$ is on one of the chains $F^{(k)}(z)$ for $z$ in GOE and $k\geq 1$. 
\end{proof}

\subsubsection*{Proof of Theorem (\ref{Solutioninperiodicorbit})}.
\begin{proof}
Let $S(F,y)$ be periodic of period $N$. The point $x=F^{(N-1)}(y)$ then satisfies the equation $F(x)=y$, hence this is one and the unique solution of the equation in the periodic orbit. But then it follows that for this solution $x$, the periodic sequence $S(F,x)=S(F,y)$. Hence if $m(X)$ as described in (\ref{minpoly}) is the minimal polynomial of $S(F,Y)$,
\[
m(X)(x)=F^{(m)}(x)-\sum_{i=0}^{(m-1)}\al_iF^{i}(x)=0
\]
From this expression of $m(X)(x)$ the term $x$ can be solved uniquely since $\al_0\neq 0$. 
\[
x=(1/\al_0)[F^{(m)}(x)-\sum_{i=1}^{(m-1)}\al_iF^{(i)}(x)]
\]
Then by using the condition $y=F(x)$ one gets the relation (\ref{solution}). This is the expression of the unique solution $x$ in the periodic orbit $S(F,y)$.
\end{proof}
\section{Conclusions}
It is shown that cryptanalysis of symmetric as well as public key primitives can be formulated as a problem of local inversion of a map $F:\ff^n\rightarrow\ff^n$ at $y$ in $\ff^n$ given the equation $y=F(x)$ over a finite field $\ff$. In some cases $F$ is an embedding and maps $\ff^n$ to $\ff^m$ for $n<m$. An incomplete algorithm is developed which can address both of these problems especially in the practical situation, when only a polynomial size length of the sequence of recurrence defined by $F$ and $y$ can be made available for inversion. If the linear complexity (LC) of the recurrence defined by $F$ and $y$ is of polynomially bounded order $O(n^k)$, then a possible  solution can be computed in polynomial time. Local inversion is shown as a uniform methodology for cryptanalysis of symmetric encryption algorithms, cryptanalysis of RSA without factoring the modulus and computation of the discrete logarithms on elliptic curves and finite fields. In real life cases it will be worthwhile to carry out the computations proposed in the incomplete algorithm to determine density of such low LC instances occurring within the bounds of practically feasible time and memory. Hence such computations can be used for estimating security of cryptographic primitives as functions of their parameters in terms of sizes of linear systems required to be solved over specific finite field. Main conclusion of this paper is that the low LC of recurrence of a map $F$ on a value $y$ defined by the cryptographic primitive is the vital sufficient condition for its weakness. Hence computation of LC of recurrences of cryptographic primitives should be considered for standardization of cryptographic algorithms. 

\subsubsection*{}
\begin{center}
    Acknowledgements
    
\noindent
Author is thankful to Shashank Sule for a useful discussion and to Ramachandran and Shravani Shahapure for help in correcting errors.
\end{center}

\end{document}